\definecolor{zx_green}{rgb}{216,248,216}
\definecolor{zx_red}{rgb}{232,165,165}
\tikzstyle{box}=[shape=rectangle, text height=1.5ex, text depth=0.25ex, yshift=0.5mm, fill=white, draw=black, minimum height=5mm, yshift=-0.5mm, minimum width=5mm, font={\small}]
\tikzstyle{Z dot}=[inner sep=0mm, minimum size=2mm, shape=circle, draw=black, fill={rgb,255: red,216; green,248; blue,216}, tikzit fill={rgb,255: red,216; green,248; blue,216}]
\tikzstyle{Z phase dot}=[minimum size=4.75mm, font={\footnotesize}, shape=rectangle, rounded corners=1.9mm, inner sep=0.1mm, outer sep=-2mm, scale=0.8, tikzit shape=circle, draw=black, fill={rgb,255: red,216; green,248; blue,216}, tikzit draw=blue, tikzit fill={rgb,255: red,216; green,248; blue,216}]
\tikzstyle{X dot}=[Z dot, shape=circle, draw=black, fill={rgb,255: red,232; green,165; blue,165}, tikzit fill={rgb,255: red,232; green,165; blue,165}]
\tikzstyle{X phase dot}=[Z phase dot, tikzit shape=circle, tikzit fill={rgb,255: red,232; green,165; blue,165}, fill={rgb,255: red,232; green,165; blue,165}, font={\footnotesize}, tikzit draw=blue]
\tikzstyle{XD dot}=[Z dot, fill={rgb,255: red,255; green,200; blue,240}, tikzit fill={rgb,255: red,255; green,200; blue,240}]
\tikzstyle{XD phase dot}=[Z phase dot, fill={rgb,255: red,255; green,200; blue,240}, tikzit fill={rgb,255: red,255; green,200; blue,240}, tikzit draw=blue]
\tikzstyle{hadamard}=[fill=yellow, draw=black, shape=rectangle, inner sep=0.6mm, minimum height=1.5mm, minimum width=1.5mm, xslant=0.5]
\tikzstyle{hz}=[hadamard, fill={rgb,255: red,216; green,248; blue,216}, shape=rectangle, tikzit fill={rgb,255: red,216; green,248; blue,216}, minimum height=2 mm, minimum width=1.25 mm, tikzit draw=black]
\tikzstyle{hx}=[hadamard, fill={rgb,255: red,232; green,165; blue,165}, shape=rectangle, tikzit fill={rgb,255: red,232; green,165; blue,165}, minimum height=2 mm, minimum width=1.25 mm, tikzit draw=black]
\tikzstyle{vertex}=[inner sep=0mm, minimum size=1mm, shape=circle, draw=black, fill=black]
\tikzstyle{vertex set}=[inner sep=0mm, minimum size=1mm, shape=circle, draw=black, fill=white, font={\footnotesize\boldmath}]
\tikzstyle{meter}=[draw, fill=white, minimum width=2em, minimum height=1.5em, rectangle, path picture={\draw ([shift={(.1,.24)}]path picture bounding box.south west) to[bend left=50] ([shift={(-.1,.24)}]path picture bounding box.south east);\draw[-{Latex[scale=0.6]}] ([shift={(0,.1)}]path picture bounding box.south) -- ([shift={(.3,-.1)}]path picture bounding box.north);}, tikzit shape=rectangle]
\tikzstyle{white dot}=[Z dot]
\tikzstyle{gray dot}=[X dot]
\tikzstyle{white phase dot}=[Z phase dot]
\tikzstyle{gray phase dot}=[X phase dot]
\tikzstyle{red ket}=[fill={rgb,255: red,232; green,165; blue,165}, draw=black, shape=isosceles triangle, tikzit fill={rgb,255: red,232; green,165; blue,165}, tikzit draw=black, inner sep=0 mm, outer sep=2 mm]
\tikzstyle{green ket}=[fill={rgb,255: red,216; green,248; blue,216}, draw=black, shape=isosceles triangle, tikzit fill={rgb,255: red,216; green,248; blue,216}, tikzit draw=black, inner sep=0 mm, outer sep=2 mm]
\tikzstyle{filament}=[hadamard, fill=yellow, draw=none, minimum height=0.01mm]
\tikzstyle{gate}=[box, minimum height=10mm, minimum width=10mm]
\tikzstyle{2 control}=[vertex set, draw=blue, inner sep=0.5pt]
\tikzstyle{1 control}=[2 control, draw=red]
\tikzstyle{0 control}=[2 control, draw=black]
\tikzstyle{small dot}=[vertex, minimum size=1 mm, draw=black, tikzit draw=black, tikzit fill=black, tikzit shape=circle]
\tikzstyle{tallbox}=[box, minimum height=12mm]
\tikzstyle{targ}=[vertex set, minimum size=0.5mm, inner sep=-0.5mm, tikzit shape=circle, shape=circle, tikzit draw=black]
\tikzstyle{hadamardbox}=[hadamard, xslant=0]
\tikzstyle{zn}=[shape=zn, tikzit draw=black, draw=black, inner sep=2pt]
\tikzstyle{directedarrow}=[draw={rgb,255: red,223; green,223; blue,223}, ->, tikzit draw={rgb,255: red,223; green,223; blue,223}, line width=1 pt]
\tikzstyle{simple}=[-]
\tikzstyle{hadamard edge}=[-, color={rgb,255: red,0; green,100; blue,248}, dashed, dash pattern=on 2pt off 0.7pt, tikzit draw={rgb,255: red,0; green,100; blue,248}]
\tikzstyle{brace edge}=[-, tikzit draw=blue, decorate, decoration={brace,amplitude=1mm,raise=-1mm}]
\tikzstyle{gray}=[-, draw={rgb,255: red,223; green,223; blue,223}, line width=1 pt]
\tikzstyle{arrow}=[<-, draw={rgb,255: red,128; green,128; blue,128}]
\tikzstyle{double-arrow}=[draw={rgb,255: red,128; green,128; blue,128}, <->]
\tikzstyle{dashed edge}=[-, dashed, dash pattern=on 2pt off 0.5pt, draw=black]
\tikzstyle{diredge}=[->]
\tikzstyle{double edge}=[-, double, shorten <=-1mm, shorten >=-1mm, double distance=2pt]
\tikzstyle{thin}=[-, line width=0.05mm]
\tikzstyle{thin gray}=[-, draw={rgb,255: red,223; green,223; blue,223}, line width=0.05mm]
\tikzstyle{less thin}=[-, line width=0.1mm]
\tikzstyle{dashed gray edge}=[-, dashed edge, draw={rgb,255: red,128; green,128; blue,128}]
\tikzstyle{light right directed arrow}=[->, directedarrow, draw={rgb,255: red,223; green,223; blue,223}, line width=0.2mm]
\tikzstyle{diredge0.3}=[->, line width=0.3 mm]
\tikzstyle{less thin gray}=[-, draw={rgb,255: red,223; green,223; blue,223}]
\tikzstyle{dashed thin purple}=[-, dashed, line width=0.1mm, draw={rgb,255: red,128; green,106; blue,219}]
\tikzstyle{hadamardedge}=[-, color={rgb,255: red,100; green,200; blue,248}, dashed, dash pattern=on 2pt off 0.7pt, tikzit draw={rgb,255: red,120; green,220; blue,248}]
\tikzstyle{line0.3}=[-, line width=0.3mm]
\tikzstyle{light blue line}=[-, color={rgb,255: red,100; green,200; blue,248}, tikzit draw={rgb,255: red,100; green,200; blue,248}]
\tikzstyle{pink0.2}=[-, color={rgb,255: red,248; green,100; blue,200}, line width=0.2mm, tikzit draw={rgb,255: red,248; green,100; blue,200}]
\newcommand{\R}{\mathbb{R}}
\newcommand{\coloneqq}{:=}
\theoremstyle{definition}
\newtheorem{theorem}{Theorem}[section]
\newtheorem{corollary}[theorem]{Corollary}
\newtheorem{lemma}[theorem]{Lemma}
\newtheorem{proposition}[theorem]{Proposition}
\newtheorem{definition}[theorem]{Definition}
\newtheorem{example*}[theorem]{Example*}
\newtheorem{examples*}[theorem]{Examples*}
\newtheorem{remark}[theorem]{Remark}
\newtheorem{remark*}[theorem]{Remark*}
\title{Building Qutrit Diagonal Gates from Phase Gadgets}
\author{John van de Wetering
\institute{University of Oxford}
\email{john@vdwetering.name} \and
Lia Yeh
\institute{University of Oxford}
\email{lia.yeh@cs.ox.ac.uk} 
}
\begin{document}

\maketitle

\begin{abstract}
    Phase gadgets have proved to be an indispensable tool for reasoning about ZX-diagrams, being used in optimisation and simulation of quantum circuits and the theory of measurement-based quantum computation. In this paper we study phase gadgets for qutrits. We present the \emph{flexsymmetric} variant of the original qutrit ZX-calculus, which allows for rewriting that is closer in spirit to the original (qubit) ZX-calculus. In this calculus phase gadgets look as you would expect, but there are non-trivial differences in their properties. We devise new qutrit-specific tricks to extend the graphical Fourier theory of qubits, resulting in a translation between the `additive' phase gadgets and a `multiplicative' counterpart we dub \emph{phase multipliers}.

    This enables us to generalise the qubit notion of multiple-control to qutrits in two ways.  The first type is controlling on a single tritstring, while the second type applies the gate a number of times equal to the tritwise multiplication modulo 3 of the control qutrits.
    We show how both types of control can be implemented for any qutrit $Z$ or $X$ phase gate, ancilla-free, and using only Clifford and phase gates. The first requires a polynomial number of gates and exponentially small phases, while the second requires an exponential number of gates, but constant sized phases. This is interesting, because such a construction is not possible in the qubit setting. 

    As an application of these results we find a construction for \emph{emulating} arbitrary qubit diagonal unitaries, and specifically find an ancilla-free emulation for the qubit CCZ gate that only requires three single-qutrit non-Clifford gates --- provably lower than the four $T$ gates needed for qubits with ancilla.
\end{abstract}

\section{Introduction}

Most quantum computing theory developed thus far has focussed on qubits --- two-level quantum systems.
However, there has been a recent surge of interest in studying the more general case of $d$-level quantum systems, called \emph{qudits}. This has led to applications of qudits for quantum algorithms~\cite{WangY2020quditsreview}, improving magic state distillation noise thresholds~\cite{CampbellE2014quditmsdthresholds}, and communication noise resilience~\cite{CozzolinoD2019quditcommunication}.
Qudits have been experimentally demonstrated on quantum processors based on ion traps~\cite{RingbauerM2021quditions} and superconducting devices~\cite{BlokM2021scrambling,YeB2018cphasephoton,YurtalanM2020Walsh-Hadamard,HillA2021doublycontrolled}.

The specific case of qu\emph{trits}, where $d=3$, has been used to improve qubit readout~\cite{MalletF2009qubitreadout2state}, but most notably, qutrits have been used to study \emph{emulation}: where qubit computation is emulated inside a subspace of the qudits to enable more resource-efficient gate implementations.  In contrast, it has been argued that qubits cannot simulate qudit (where $d > 2$ and $d$ is not a power of $2$) computation efficiently~\cite{BullockS2005qubitemuqudit}.

Much work on qutrits and emulation has focussed on \emph{classical} functions: those that come from a map of classical trits.
For instance, using qutrits we can build logarithmic-depth Toffolis~\cite{GokhaleP2019asymptotic,NikolaevaAS2022mctqutrit} and binary AND gates on superconducting qutrits~\cite{ChuJ2021superconductingand}.
This leaves open the question of whether there are any advantages to emulation by studying `truly' quantum gates such as diagonal unitaries.
For qubits a useful tool for understanding diagonal unitaries has been the concept of a \emph{phase gadget}~\cite{KissingerA2020reduc}. This is a type of symmetric multi-qubit interaction that occurs naturally in many hardware architectures~\cite{PinoJM2021honeywellarchitecture,PetitL2020silicontwoqubitgates,SheldonS2016IBMCRgate}, and serves as a good basis for optimising quantum circuits~\cite{phaseGadgetSynth,cowtan2020generic,deBeaudrapN2020reducepifourphase,deBeaudrapN2020treducspidernest,vandeWeteringJ2021globalgates,Backens2020extraction}. Any diagonal qubit unitary can be expressed as a product of phase gadgets by writing the unitary as a \emph{phase polynomial}~\cite{AmyVerification,deGriendA2020architecturephasepoly}.

In this paper we study the generalisation of phase gadgets to the qutrit setting. We do this by adapting the qutrit ZX-calculus of Refs.~\cite{GongX2017equivalence,WangQ2018qutrit} and transforming it into a \emph{flexsymmetric} calculus~\cite{Carette2021OTM} where the spiders have more desirable symmetry properties. We find this calculus has a simple set of rules for the Clifford fragment. We define phase gadgets analogously to the qubit case, meaning that as diagrams they look nearly identical. There are however significant differences between the qubit and qutrit gadgets. 
We will show that we can nevertheless use qutrit phase gadgets to construct some useful qutrit diagonal unitaries, such as controlled phase gates, and a type of gate we dub a \emph{phase multiplier}. This last one is possible by generalising the formula that leads to the \emph{graphical Fourier theory} for qubit diagonal unitaries~\cite{GraphicalFourier2019}.

As an application of our results we show how we can emulate an arbitrary qubit diagonal unitary using qutrit phase gadgets. This leads us to a construction of the emulated qubit CCZ gate that requires only three non-Clifford qutrit \emph{$R$ gates}~\cite{GlaudellA2022qutritmetaplecticsubset}. This is surprising because using just qubits, we would require at least four $T$ gates to implement the CCZ~\cite{HowardM2017resourcetheorymagic}.

We start the paper by reviewing the basics of qutrit quantum computation in Section~\ref{sec:qutrit-computing}. Then we introduce the flexsymmetric qutrit ZX-calculus in Section~\ref{sec:qutrit-ZX}. Diagonal qutrit unitaries, phase gadgets, controlled phase gates, and phase multipliers are studied in Section~\ref{sec:diagonal-gates}. We show how to use these to emulate diagonal qubit unitaries in Section~\ref{sec:emulation-application} and end with some discussion on future work in Section~\ref{sec:conclusion}.

\section{Qutrit quantum computation}\label{sec:qutrit-computing}

A qubit is a two-dimensional Hilbert space. Similarly, a qutrit is a three-dimensional Hilbert space. We will write $\ket{0}$, $\ket{1}$, and $\ket{2}$ for the standard computational basis states of a qutrit.
Any normalised qutrit state can then be written as $\ket{\psi} = \alpha \ket{0} +  \beta \ket{1} + \gamma \ket{2}$ where $\alpha,\beta,\gamma\in \mathbb{C}$ and $|\alpha|^2 + |\beta|^2 + |\gamma|^2 = 1$.

Several concepts for qubits extend to qutrits, or more generally to qu\emph{dits}, which are $d$-dimensional quantum systems. In particular, the concept of Pauli's and Cliffords.
For a $d$-dimensional qudit, we define the respective Pauli $X$ and $Z$ gates as
\begin{equation}
    X\ket{k} = \ket{k+1} \qquad\qquad Z\ket{k} = \omega^k \ket{k}
\end{equation}
where $\omega:= e^{2\pi i/d}$ is such that $\omega^d = 1$, and the addition $\ket{k+1}$ is taken modulo $d$~\cite{GottesmanD1999ftqudit,HowardM2012quditTgate}. Note that for qubits this $X$ gate is just the NOT gate, while $Z=\text{diag}(1,-1)$.
We call unitaries generated by products and tensor products of the $X$ and $Z$ gate \emph{Pauli} gates. 
 In this paper we will work solely with qutrits, so we take $\omega$ to always be equal to $e^{2\pi i/3}$. Note that $\omega^{-1} = \omega^2 = \bar{\omega}$ where $\bar{z}$ denotes the complex conjugate of $z$.

For a qubit there is only one non-trivial permutation of the standard basis states, implemented by the $X$ gate.
For qutrits there are five non-trivial permutations of the basis states. By analogy we will call them all ternary $X$ gates. These gates are $X_{+1}$, $X_{-1}$, $X_{01}$, $X_{12}$, and $X_{02}$. 
The gate $X_{\pm 1}$ sends $\ket{t}$ to $\ket{(t \pm 1) \text{ mod } 3}$ for $t \in \{0, 1, 2\}$; $X_{01}$ is just the qubit $X$ gate which is the identity when the input is $\ket{2}$; $X_{12}$ sends $\ket{1}$ to $\ket{2}$ and $\ket{2}$ to $\ket{1}$, and likewise for $X_{02}$.
Note that the qutrit Pauli $X$ gate is the $X_{+1}$ gate, while $X^\dagger = X_{-1} = X^2$.

Another concept that translates to qutrits (or more generally qudits) is that of Clifford unitaries.

\begin{definition}\label{def:Clifford}
    Let $U$ be a qudit unitary acting on $n$ qudits. We say it is \emph{Clifford} when every Pauli is mapped to another Pauli under conjugation by $U$. I.e.~if $UPU^\dagger$ is a Pauli for any Pauli $P$.
\end{definition}

The set of $n$-qudit Cliffords forms a group under composition. For qubits, this group is generated by the $S$, Hadamard and CX gates. The same is true for qutrits, for the right generalisation of these gates\footnote{The gate definitions for various qudit Cliffords may vary across the literature up to a global phase.  Indeed, by Definition~\ref{def:Clifford}, whether a gate is Clifford is invariant under changes in global phase.}~\cite{GottesmanD1999ftqudit}.

\begin{definition}
    The qutrit $S$ gate is $S\coloneqq \text{diag}(1,1,\omega)$. I.e.~it multiplies the $\ket{2}$ state by the phase $\omega$.
\end{definition}

For qubits, the Hadamard gate interchanges the $Z$ eigenbasis $\{\ket{0},\ket{1}$ and the $X$ eigenbasis consisting of the states $\ket{\pm} \coloneqq \frac{1}{\sqrt{2}}(\ket{0}\pm \ket{1})$. The same holds for the qutrit Hadamard.
In this case the $X$ basis consists of the following states:
\begin{equation*}
    \ket{+} \ \coloneqq \ \frac{1}{\sqrt{3}} (\ket{0}+\ket{1}+\ket{2}) \qquad
    \ket{\omega} \coloneqq\  \frac{1}{\sqrt{3}} (\ket{0}+\omega\ket{1}+\bar{\omega}\ket{2}) \qquad
    \ket{\bar{\omega}} \ \coloneqq\  \frac{1}{\sqrt{3}} (\ket{0}+\bar{\omega}\ket{1}+\omega\ket{2})
\end{equation*}

\begin{definition}
    The \emph{qutrit Hadamard gate} $H$ is the unitary mapping $\ket{0} \mapsto \ket{+}$, $\ket{1}\mapsto \ket{\omega}$ and $\ket{2} \mapsto \ket{\bar{\omega}}$. 
    \begin{equation}\label{eq:hgatedef}
        H \ \coloneqq \ \frac{1}{\sqrt{3}}\begin{pmatrix}
            1 & 1 & 1 \\
            1 & \omega & \bar{\omega} \\
            1 & \bar{\omega} & \omega
        \end{pmatrix}
    \end{equation}
\end{definition}
Note that, unlike the qubit Hadamard, the qutrit Hadamard is \emph{not} self-inverse. 
Instead we have $H^2 = X_{12}$ so that $H^4 = \mathbb{I}$. This means that $H^\dagger = H^3$.

The final Clifford gate we need is the qutrit CX.
\begin{definition}
    The qutrit CX is defined such that $\text{CX}\ket{i,j} = \ket{i,(i+j)~\text{mod }3}$, where $i,j\in \{0,1,2\}$.
\end{definition}
Any qutrit Clifford unitary can be written as a composition of $S$, $H$ and CX gates (up to global phase).
Clifford gates are efficiently classically simulable, so we need to add a non-Clifford gate to get an (approximately) universal gate set for quantum computing~\cite{GottesmanD1999ftqudit}. Here we consider when this is a phase gate.
\begin{definition}\label{def:Z-phase-gate}
    We write $Z(a,b)$ for the \emph{phase gate} that acts as $Z(a,b)\ket{0} = \ket{0}$, $Z(a,b)\ket{1} = \omega^a\ket{1}$ and $Z(a,b)\ket{2} = \omega^b\ket{2}$ where we take $a,b\in \mathbb{R}$.
\end{definition}
We define $Z(a,b)$ in this way, taking $a$ and $b$ to correspond to powers of $\omega$, because then $Z(a,b)$ is Clifford iff $a$ and $b$ are both integers, so that we can easily see from the parameters whether the gate is Clifford or not. The group of $Z(a,b)$ phase gates constitutes the group of diagonal single-qutrit unitaries modded out by a global phase. Composition of these gates is given by $Z(a,b)\cdot Z(c,d)=Z(a+c,b+d)$. Note that $S = Z(0,1)$.
This brings us to the definition of the qutrit $T$ gate.
\begin{definition}
    The qutrit $T$ gate is defined as $T \coloneqq Z(\frac{1}{3}, -\frac{1}{3}) = \text{diag}(1,\omega^{\frac{1}{3}},\omega^{-\frac{1}{3}})$~\cite{PrakashS2018normalform,CampbellE2012tgatedistillation,HowardM2012quditTgate}.
\end{definition}
Like the qubit $T$ gate, the qutrit $T$ gate belongs to the third level of the Clifford hierarchy, can be injected into a circuit using magic states, and its magic states can be distilled by magic state distillation. This means that we can fault-tolerantly implement this qutrit $T$ gate on many types of quantum error correcting codes.
Also as for qubits, the qutrit Clifford+$T$ gate set is approximately universal, meaning that we can approximate any qutrit unitary using just Clifford gates and the $T$ gate~\cite[Theorem 1]{CuiS2015universalmetaplectic}.

There is another useful single-qutrit non-Clifford gate.
\begin{definition}\label{def:Rgate}
The qutrit \emph{reflection} gate is defined as $R\coloneqq Z(0,3/2) = \text{diag}(1,1,-1)$.
\end{definition}
Like the $T$ gate, the $R$ gate can be added to the Clifford gate set to attain universality~\cite{GottesmanD1999ftqudit}, as explicitly proved in Ref.~\cite[Theorem~2]{CuiS2015universalmetaplectic}.
It can be exactly synthesized fault-tolerantly in three known ways: magic state distillation followed by repeat-until-success injection~\cite{AnwarH2012r2distillation}, braiding and topological measurement of weakly-integral non-Abelian anyons~\cite{CuiS2015universalweakly,CuiS2015universalmetaplectic} followed by repeat-until-success injection~\cite{AnwarH2012r2distillation}, or unitarily in qutrit Clifford+$T$~\cite{GlaudellA2022qutritmetaplecticsubset}.

\subsection{Controlled unitaries}

When we have an $n$-qubit unitary $U$, we can speak of the controlled gate that implements $U$. This is the $(n+1)$-qubit gate that acts as the identity when the first qubit is in the $\ket{0}$ state, and implements $U$ on the last $n$ qubits if the first qubit is in the $\ket{1}$ state.
For qutrits there are multiple notions of control.

\begin{definition}\label{def:ket2-controlled}
Let $U$ be a qutrit unitary. Then the \emph{$\ket{2}$-controlled $U$} is the unitary that acts as
\begin{equation*}
    \ket{0}\otimes \ket{\psi} \mapsto \ket{0}\otimes \ket{\psi} \qquad
    \ket{1}\otimes \ket{\psi} \mapsto \ket{1}\otimes \ket{\psi} \qquad
    \ket{2}\otimes \ket{\psi} \mapsto \ket{2}\otimes U\ket{\psi}
\end{equation*}
I.e.~it implements $U$ on the last qutrits if and only if the first qutrit is in the $\ket{2}$ state. 
\end{definition}
Note that by conjugating the first qutrit with $X_{+1}$ or $X_{-1}$ gates we can make the gate also be controlled on the $\ket{1}$ or $\ket{0}$ state.
A different notion of qutrit control was introduced in Ref.~\cite{BocharovA2017ternaryshor} where if the control is in the $\ket{x}$ state, then it should apply $U^x$ on the target, i.e.~apply $U$ once iff $x=1$ and $U^2$ iff $x=2$.
An example of this is the Clifford CX gate defined earlier, which applies $X_{+1}$ when the control is $\ket{1}$ and $X_{+2}$ when it is $\ket{2}$.
Note that we can get this latter notion of control from the former: just apply a $\ket{1}$-controlled $U$, followed by a $\ket{2}$-controlled $U^2$. 

A number of Clifford+$T$ constructions for controlled qutrit unitaries are already known. For instance, all the $\ket{2}$-controlled permutation $X$ gates can be built from the constructions given in Ref.~\cite{BocharovA2016ternaryarithmetics}.
In our previous work, we provided ancilla-free explicit constructions for any multiple-controlled Clifford+$T$ unitary in the Clifford+$T$ gate set, with gate count polynomial in the number of controls~\cite{YehL2022qutritcontrolledcliffordplust}.
In this work, by using the qutrit ZX-calculus, we will build upon our previous results and show how to construct multiple-controlled phase gates for an arbitrary phase. 

\section{The qutrit ZX-calculus}\label{sec:qutrit-ZX}
We will assume the reader has some familiarity with the original qubit ZX-calculus~\cite{CoeckeB2011interacting}. For a review see Ref.~\cite{vandewetering2020zxcalculus}. 

A qutrit ZX-calculus was presented and used in Refs.~\cite{RanchinA2014quditzx,WangQ2014qutritcalculus,GongX2017equivalence,WangQ2018qutrit,townsend-teague2021classifying}. While quite similar to the qubit one, it loses some of the properties that make the original easy to work with. In particular, for each X-spider, the distinction between its input wires and output wires becomes important. This means we can no longer treat qutrit ZX-diagrams as undirected graphs with the spiders as vertices. This makes intuitive reasoning about these diagrams harder, and also complicates the implementation of software for dealing with these diagrams.

Here we will present a variation on the qutrit ZX-calculus of Refs.~\cite{GongX2017equivalence,WangQ2018qutrit} where the spiders do enjoy this additional symmetry between inputs and outputs. The way we do this is by redefining the X-spider. In the original qutrit ZX-calculus we have
\begin{equation}
\tikzfig{Xsp} \ \ \propto \sum_{\substack{\vec x, \vec y\\x_1+\cdots+x_n = y_1+\cdots + y_m}} \!\!\!\!\!\!\!\!\ketbra{\vec y}{\vec x}.
\end{equation}
Here the sum $x_1+\cdots+x_n = y_1+\cdots + y_m$ is taken modulo 3. If we put a cup on one of the wires to turn an output into an input, then this has the effect of introducing a minus sign on that variable, changing for instance $x_1+x_2 = y_1+y_2$ into $x_1+x_2-y_2 = y_1$. For qubits this is not a problem since $-x = x$ modulo $2$, but for qutrits this changes the map. We fix this by defining a new X-spider as
\begin{equation}\label{eq:XDsp}
\tikzfig{XDsp} \ \ \propto \sum_{\substack{\vec x, \vec y\\x_1+\cdots+x_n+y_1+\cdots + y_m = 0}} \!\!\!\!\!\!\!\!\ketbra{\vec y}{\vec x}.
\end{equation}
We see that in this definition the inputs and outputs are treated on equal footing. In order to prevent confusion with earlier work, we will denote this new X-spider in pink, instead of in red\footnote{We have checked the accessibility of this color scheme; in fact, a red-green colorblind person greatly preferred this pink to the default ZX red.}.

Let's now give the full definition of the spiders. We define the Z-spider as
\[\tikzfig{Zsp-phase} \ = \ \ketbra{0\cdots 0}{0\cdots 0} \ +\  \omega^\alpha \ketbra{1\cdots 1}{1\cdots 1} \ +\  \omega^\beta \ketbra{2\cdots 2}{2\cdots 2}.\]
Here we have two phase angles $\alpha$ and $\beta$, as opposed to just the one angle in qubit ZX. In general, for a $d$-dimensional spider, you will need to specify $d-1$ phases. In particular, when written in a spider $\frac{\alpha}{\beta}$ should be interpreted as two different phases and not as the fraction $\alpha/\beta$.
Note that we define the phase angles as $\omega^\alpha$ and $\omega^\beta$ so that these correspond to the complex phases $e^{\frac{2\pi i}{3} \alpha}$ and $e^{\frac{2\pi i}{3} \beta}$. This means that when $\alpha$ and $\beta$ are integers, that the spiders correspond to the Clifford fragment of the calculus.
We define the X-spider similarly, but with respect to the X-basis:
\[\tikzfig{XDsp-phase} \ = \ \ketbra{+\cdots +}{+\cdots +} \ +\  \omega^\alpha \ketbra{\bar{\omega}\cdots \bar{\omega}}{\omega\cdots \omega} \ +\  \omega^\beta \ketbra{\omega\cdots \omega}{\bar{\omega}\cdots \bar{\omega}}\]
This requires some explanation, because this does not look symmetric in the inputs and outputs. However, note that 
$\bra{\omega} = (\ket{\omega})^\dagger = (\ket{0}+\omega \ket{1}+\omega \ket{2})^\dagger = \bra{0} + \bar{\omega} \bra{1} + \omega \bra{2}$.
Hence, if we take the transpose of $\ket{\omega}$ we actually get $\bra{\bar{\omega}}$.
It is straightforward to check that with $\alpha=\beta=0$ we get back Eq.~\eqref{eq:XDsp}. These definitions of the Z-spider and X-spider satisfy the symmetry properties we want, namely:
\begin{equation*}
    \scalebox{0.9}{\tikzfig{spider-symmetries}}
\end{equation*}
These symmetries mean our spiders are \emph{flexsymmetric}, as defined by Carette~\cite{Carette2021OTM}, and as a result we may treat our ZX-diagrams as undirected graphs with the spiders as vertices. Note that here the cups and caps are defined with respect to the $Z$ basis: $\subset \ =\  \ket{00}+\ket{11}+\ket{22}$. As usual, our calculus also formally has generators for the identity wire and the swap.

It will be useful to introduce an additional graphical generator for the Hadamard:
\begin{equation}
\left\llbracket\tikzfig{RGgenerator/RGg_Hada}\right\rrbracket=\ket{+}\bra{0}+ \ket{\omega}\bra{1}+\ket{\bar{\omega}}\bra{2}=\ket{0}\bra{+}+ \ket{1}\bra{\bar{\omega}}+\ket{2}\bra{\omega}.
\end{equation}
We write the Hadamard as a slanted box, because it is self-transpose, but not self-adjoint, and so should be denoted in a way that is symmetric under a rotation, but not a reflection.

Our redefinition of the X-spider comes at a `cost'. Namely, the 1-input, 1-output X-spider is no longer the identity:
$
\tikzfig{XD1-1} \ = \ \ket{0}\bra{0}+ \ket{2}\bra{1}+\ket{1}\bra{2} \ = \ \ket{+}\bra{+}+\ket{\bar{\omega}}\bra{\omega}+ \ket{\omega}\bra{\bar{\omega}}
$.
This map is implementing $\ket{x} \mapsto \ket{-x}$ where $-x$ is taken modulo 3, and is equal to $X_{12}$. Additionally, the X-spider is not really a spider any more in the sense that it doesn't satisfy the standard spider-fusion equation. Instead it satisfies the `harvestman equation'~\cite{Carette2021OTM} that also holds for for instance the W-spider~\cite{hadzihasanovic2015diagrammatic} and H-box~\cite{EPTCS287.2}:
\[\tikzfig{XD_rules/spidernewprime}\]

In Figure~\ref{fig:qutritphaseexactflexsymmetricrules}, we present a full set of rewrite rules for this qutrit ZX-calculus.
We have accounted for the global phase for each rule here as a complex number, as those will be relevant to us.  Note however that the rewrite rules are not scalar-accurate as we are ignoring factors of $\sqrt{3}$.

\begin{figure}[!tb]
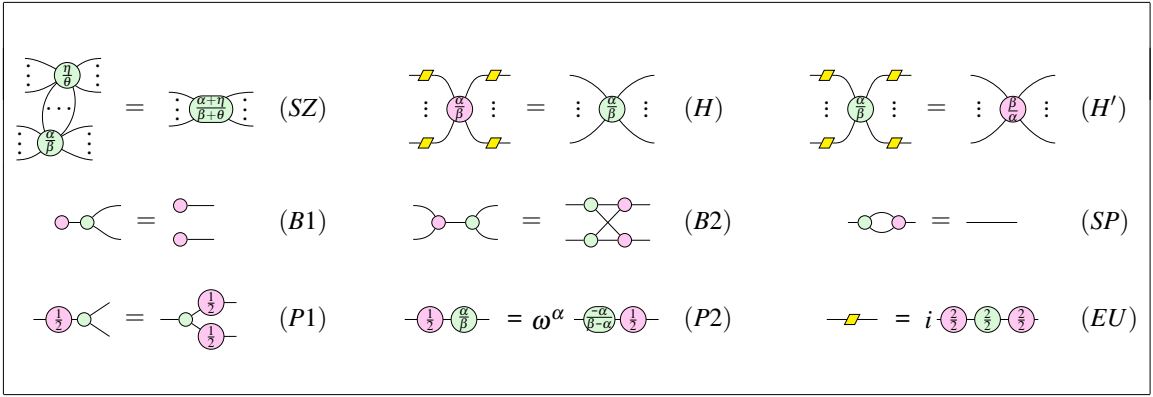

    \centering
    \scalebox{0.9}{$
    \def\arraystretch{3}
    \begin{array}{|clclcl|}
    \hline
    &&&&& \\[-2em]
    \tikzfig{XD_rules/spidernew}&(SZ)\qquad&\tikzfig{XD_rules/h}&(H)\qquad&\tikzfig{XD_rules/hp}&(H')\\
    \tikzfig{XD_rules/b1}&(B1)\qquad&\tikzfig{XD_rules/b2}&(B2)\qquad&\tikzfig{XD_rules/p}&(SP)\\
    \tikzfig{XD_rules/k1}&(P1)\qquad&\tikzfig{XD_rules/k2}&(P2)\qquad&\tikzfig{XD_rules/eu}&(EU)\\[1.5em]
    \hline
    \end{array}$
    }
    \caption{Rules for the flexsymmetric qutrit ZX-calculus. These hold for all $\alpha,\beta,\eta,\theta\in\R$, and for any permutation of the input and output wires. Additional useful derived rules are presented in Figure~\ref{fig:qutritphaseexactflexsymmetricderivedrules}.
    The letters stand respectively for (S)pider-Z, (H)adamard, (B)ialgebra, (SP)ecial, (P)auli, and (EU)ler decomposition.
    }\label{fig:qutritphaseexactflexsymmetricrules}
\end{figure}
Using these rules, other useful qutrit ZX-calculus rewrite rules may be derived. 
In particular, we can use these rules to prove the derived rules presented in Figure~\ref{fig:qutritphaseexactflexsymmetricderivedrules}.
\begin{figure}[!tb]
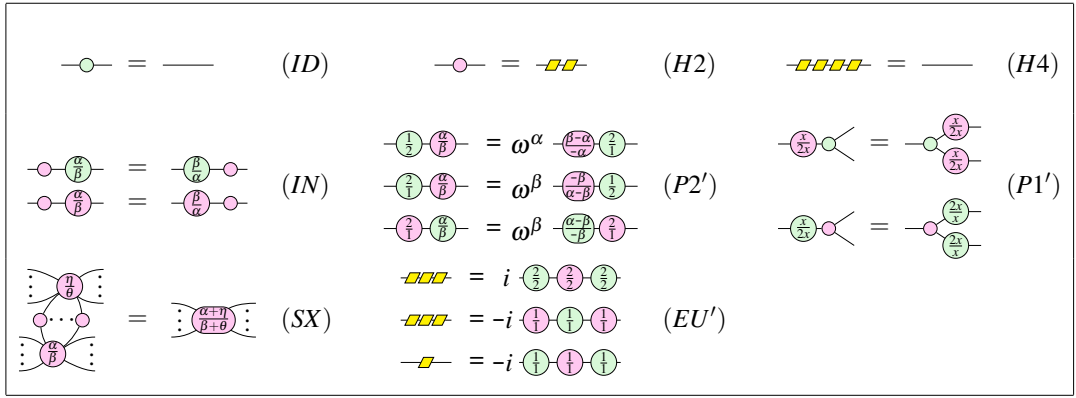

    \centering
    \scalebox{0.9}{$
    \def\arraystretch{3}
    \begin{array}{|clclcl|}
    \hline
    \tikzfig{XD_rules/s2}&(ID)\quad            &\tikzfig{XD_rules/h2}&(H2)\quad&\tikzfig{XD_rules/h4}&(H4)\\[0.5em]
    \tikzfig{XD_rules/dualizer_push}&(IN)\quad  &\tikzfig{XD_rules/k2p}&(P2')\quad&\tikzfig{XD_rules/k1p}&(P1')\\[1em]
    \tikzfig{XD_rules/spidernewprime}&(SX)\quad&\tikzfig{XD_rules/eup}&(EU')\quad&&\\[1.5em]
    \hline
    \end{array}$
    }
    \caption{These rules are derivable from the rules of Figure~\ref{fig:qutritphaseexactflexsymmetricrules} for any $\alpha,\beta,\eta,\theta\in\R$ and $x\in \{0,1,2\}$. The new letters stand respectively for (ID)entity, (IN)vert and (S)pider-X.}\label{fig:qutritphaseexactflexsymmetricderivedrules}
\end{figure}
As these rules are (a slight variation) on the non-flexsymmetric qutrit rules of Ref.~\cite{WangQ2018qutrit}, our calculus is also complete for the qutrit Clifford fragment (when ignoring non-zero scalars). 
The proofs of the derived rules of Figure~\ref{fig:qutritphaseexactflexsymmetricderivedrules} are given in Appendix~\ref{appendix:derivedrules}. We show in Appendix~\ref{app:necessity-of-rules} that most rules of Figure~\ref{fig:qutritphaseexactflexsymmetricrules} are necessary (i.e.~not derivable from the others).

We see in these rules that there is a special role for phases of the form $\frac{x}{2x}$ where $x\in \{0,1,2\}$. This is because $\tikzfig{ket-x-2x} \propto \ket{x}$ and $\tikzfig{Z-x-phase} = Z^x$. These relations can be derived by using the identity $1+\omega+\bar{\omega} = 0$ together with $\omega^2 = \omega^{-1} = \bar{\omega}$. In general we will see a lot of $\tikzfig{Z-alpha-2alpha}$ phases because they implement the $\ket{x}\mapsto \omega^{\alpha x}\ket{x}$ phase gate.
Additionally, note that the $(P2)$ rule on the qubit subspace is exactly the familiar qubit ZX rule \tikzfig{XD_rules/qubitk2}, since the red $\pi$ is the qubit Pauli $X$ while the pink $\frac{1}{2}$ phase is the qutrit Pauli $X$.

\section{Diagonal qutrit gates}\label{sec:diagonal-gates}

\subsection{Phase gadgets}
For qubits the concept of a \emph{phase gadget} has proven very useful. There's several different ways we can define a qubit phase gadget.
One way is to consider it as the diagonal gate $\ket{x,y}\mapsto e^{i\alpha (x\oplus y)} \ket{x,y}$ (for simplicity we are only considering a two-qubit phase gadget). This applies a phase of $e^{i\alpha}$ when $x\oplus y = 1$. Here $\oplus$ is the XOR operation, which is the addition on $\mathbb{Z}_2$. This suggests that we should define the qutrit phase gadget as $\ket{x,y}\mapsto e^{i\alpha (x+y)}\ket{x,y}$ where now we take $x+y$ to be modulo 3.

We could also define a phase gadget by its circuit realisation or diagrammatic representation. For qubits~\cite{KissingerA2020reduc}:
\[\tikzfig{phase-gadget-qubit}\]
We claim the qutrit variant of this construction is given by the following circuit which can be simplified to a similar diagrammatic representation:
\begin{equation}\label{eq:phase-gadget-qutrit-simp}
    \tikzfig{phase-gadget-qutrit-simp}
\end{equation}
Indeed, inputting $\ket{x,y}$ into this diagram allows us to calculate its action:
\begin{equation}\label{eq:phase-gadgets-qutrit-calc}
    \tikzfig{phase-gadget-qutrit-calc}
\end{equation}
This `floating scalar' expression evaluates to $\sqrt{3} \omega^{\alpha (x+y \text{ mod }3)}$, so that this diagram indeed implements the operation we want, and we see that these three ways to define a qubit phase gadget---via the action, via the circuit, or via the diagrammatic representation---are also equal for qutrits.

We can easily generalise this construction to an arbitrary number of qutrits:
\begin{equation}\label{eq:phase-gadget-qutrit-multi}
    \tikzfig{phase-gadget-qutrit-multi} \ \ :: \ \ \ket{x,y,z,w} \ \mapsto \ \omega^{\alpha (x+y+z+w \text{ mod }3)} \ket{x,y,z,w} 
\end{equation}

We can also define more general phase gadgets where the phases don't have to be related to each other, i.e. we can replace $Z(\alpha, 2\alpha)$ with $Z(\alpha, \beta)$.
In this case we would still be calculating the value $x+y+z+w$ modulo 3, but then we apply a different phase depending on the value of this sum: if it is $0$ we don't apply any phase; if it is $1$ we apply $\omega^{\alpha}$; and if it is $2$ we apply $\omega^{\beta}$.

A particularly relevant choice of phases here is when $\alpha=\beta$. In this case, we apply the phase iff the sum value is not $0$. For a trit $x$ it turns out that $x^2 = 0$ if $x=0$ and $x^2 = 1$ otherwise --- this is actually a consequence of Fermat's little theorem and generalises to $x^{p-1} = 1$ modulo $p$ when $x\neq 0$ for $p$ prime.  Hence:
\begin{equation}\label{eq:Z-alpha-alpha}
    \tikzfig{Z-alpha-alpha} :: \ket{x} \mapsto \omega^{\alpha (x^2 \text{ mod }3)} \ket{x}.
\end{equation}

There is a complication with the phase gadget circuit representation that doesn't arise in the qubit setting, which is that the CNOT gate is self-inverse while the CX qutrit gate is not. In Eq.~\eqref{eq:phase-gadget-qutrit-simp} we needed to pair a CX with a CX$^\dagger$ to make the construction work. If we instead have a pair of CX gates, we get something a bit more complicated:
\begin{equation}\label{eq:phase-gadget-wrong-cnot}
    \tikzfig{phase-gadget-wrong-cnot}
\end{equation}

\begin{remark}
    Another way to view qubit phase gadgets is as an exponentiated Pauli $e^{i\alpha Z\otimes Z}$~\cite{phaseGadgetSynth,cowtan2020generic}. This however does \emph{not} generalise to qutrits, as the qutrit Pauli $Z$ is not self-adjoint, and hence cannot be exponentiated to give a unitary. In fact, a qutrit phase gadget cannot be represented as the exponential of a `pure tensor' like $e^{i\alpha A\otimes B}$. This does suggest that there could be another suitable generalisation of a phase gadget that is the exponential of a tensor of \emph{Gell-Mann matrices}, a qutrit basis of self-adjoint matrices.
\end{remark}

\subsection{Controlled phase gates}

The other type of useful diagonal gate for qubits is the \emph{controlled phase gate}. Such a gate applies a $Z(\alpha)$ gate on a qubit controlled on the value of a control.
There are multiple ways in which we can generalise these to the qutrit setting. The type of control we will consider first is the $\ket{2}$-control of Definition~\ref{def:ket2-controlled}.
To see how we can build a $\ket{2}$-controlled $Z$ phase gate, we will take inspiration from the qubit construction. Recall that there we have:
\begin{equation}\label{eq:controlled-Z-gate-decomp}
\tikzfig{controlled-Z-gate} \ = \ \tikzfig{controlled-Z-decomp}
\end{equation}
We can `port' the right-hand side to the qutrit setting, by taking each of the phases to be a $Z(\alpha,\beta)$. However, we then run into some problems. It is easy to check that when the top qutrit (the control) is $\ket{0}$ that the diagram indeed acts as the identity on the bottom qutrit (the target). However, it implements a different phase gate on the target depending on whether the control is in $\ket{1}$ or $\ket{2}$: 
\begin{equation}\label{eq:controlled-Z-qutrit}
    \tikzfig{controlled-Z-qutrit} \ \rightsquigarrow \ 
    \begin{cases}
        Z(0,0) \ &\text{if control is}\ \ket{0} \\
        Z(2\alpha-\beta,\alpha+\beta) \ &\text{if control is}\ \ket{1} \\
        Z(\alpha+\beta,2\beta-\alpha) \ &\text{if control is}\ \ket{2}
    \end{cases}
\end{equation}
Seeing as we want to construct the $\ket{2}$-controlled gate that should act as the identity when the control is $\ket{1}$ this is a problem.
We solve this issue by `doubling up' the construction, with the second construction being conjugated by $X_{12}$ on the control in order to interchange the role of $\ket{1}$ and $\ket{2}$:
\begin{equation}\label{eq:controlled-Z-qutrit-paired}
    \tikzfig{controlled-Z-qutrit-paired}
\end{equation}
By referring to Eq.~\eqref{eq:controlled-Z-qutrit} we see then that in order for Eq.~\eqref{eq:controlled-Z-qutrit-paired} to be equal to the $\ket{2}$-controlled $Z(\theta,\phi)$ gate it needs to satisfy a set of linear equations.
We can solve these to get a (unique up to some Clifford phases) solution:
\begin{equation}
    \alpha \ =\  \frac{\theta - \phi}{3} \quad\qquad \beta \ =\  \frac{\theta}{3} \quad\qquad \gamma \ =\  \frac{\phi}{3} \quad\qquad \delta \ =\  \frac{\phi - \theta}{3}
\end{equation}
We can hence write any $\ket{2}$-controlled phase gate using at most four CX gates and four phase gates. For example, if we pick $\theta = 1$ and $\phi = 2$ (so that we are constructing the controlled $Z$ gate) we get:
\begin{equation}\label{eq:ket2-Z}
    \tikzfig{ket2-Z-decomp}
\end{equation}

Here we write this blue dot with a $2$ in it to denote a $\ket{2}$-control.
We see then that our construction in the special case of $\ket{2}$-controlled Paulis indeed achieves the lowest known $T$-count of $3$~\cite{BocharovA2016ternaryarithmetics}.  By conjugating the control wire by either $X_{+1}$ or $X_{-1}$ we can make the gate instead be controlled on either $\ket{1}$ or $\ket{0}$.
 
We can add any number of controls to our construction in Eq.~\eqref{eq:controlled-Z-qutrit-paired} to make it controlled on any tritstring.  Without loss of generality, let us say the tritstring in question is $\ket{2}^{\otimes n}$ (by conjugating with $X$ gates, we can make this into a control on any tritstring of length $n$).  
The na\"ive way to construct this controlled gate is to inductively add controls to each gate in the decomposition: controlled constructions for the $X$ or CX gate are described in, for instance, Ref.~\cite{YehL2022qutritcontrolledcliffordplust}, while controlled $Z$ phase gates can be constructed by recursively applying Eq.~\eqref{eq:controlled-Z-qutrit-paired}.  However, this method is not efficient as it requires an exponential number of gates as the number of controls increases.

We can do better by not adding controls to all the gates in the decomposition:
\begin{equation}\label{eq:tritstring-controlled-Z}
    \tikzfig{tritstring-controlled-Z}
\end{equation}
In the case where all the controlled gates fire, this indeed implements any desired $Z$ phase gate on the target qutrit.  Otherwise, none of the controlled gates fire, and then the bottom two qutrits becomes identity (use $(H2)$ and $(H4)$ on the top qutrit and $(SZ)$ and $(ID)$ on the bottom qutrit).
We hence get the following proposition.

\begin{proposition}
    Any tritstring-controlled qutrit $Z$ or $X$ phase gate can be constructed without ancillae and with a polynomial number of Clifford and phase gates.
\end{proposition}
\begin{proof}
    The $X$ phase gates can be constructed from the $Z$ phase gates by conjugating by Hadamards, so we only need to describe how to construct tritstring-controlled $Z$ phase gates.
    Suppose we wish to construct a phase gate with $n$ controls.  By our prior work in Ref.~\cite{YehL2022qutritcontrolledcliffordplust}, each $\ket{2}^{\otimes (n-1)}$-controlled CX gate can be built ancilla-free using $O(n^{2.585})$ qutrit Clifford+$T$ gates.
    It then remains to show how to construct the $\ket{2}^{\otimes (n-1)}$-controlled $Z$ phase gate in Eq.~\eqref{eq:tritstring-controlled-Z}. We do this recursively. 
    To construct the gate with $k$ controls, we need four controlled CX gates with $k-1$ controls and a $k-1$ controlled $Z$ phase gate, which then needs four controlled CX gates with $k-2$ controls, and so on.  The total asymptotic gate count is then $4 O(n^{2.585}) + 4 O((n-1)^{2.585}) + ...$ which gives us a gate count of $O(n^{3.585})$.
\end{proof}
Note that in this construction, the size of the phases involved becomes exponentially smaller in the number of controls.
We will next see that there is a notion of control which circumvents this issue.

\subsection{Phase multipliers}
The $\ket{2}$-controlled phase gate is just one possible way to extend the idea of a controlled-phase gate from qubits. Another way is to realise that for qubits we can describe the action of a controlled phase gate as $\ket{x,y} \mapsto e^{i\alpha \,x\cdot y} \ket{x,y}$. Indeed, if the control qubit is in the state $x=0$, then this is just the identity, while if $x=1$, we apply $e^{i\alpha y}$ which corresponds to a $Z(\alpha)$ gate on the $\ket{y}$ qubit.

We see then that while a phase gadget is based on the addition operation of $\mathbb{Z}_2$, controlled phase gates are based on the multiplication operation of $\mathbb{Z}_2$. This suggests that the controlled phase gate equivalent for qutrits should be $\ket{x,y}\mapsto e^{i\alpha x\cdot y}\ket{x,y}$ where now we take $x\cdot y$ modulo 3. We will show how we can construct this operation using phase gadgets. In order to distinguish this type of gate from the previously considered controlled phase gates, we will refer to a gate where the phase depends on $x\cdot y$ as a \emph{phase multiplier}. Before we show how to build phase multipliers for qutrits, we first need to understand how to build them for qubits.
For bits $x$ and $y$ we have the relation 
\begin{equation}\label{eq:bit-plus-rel}
x\cdot y = \frac12 (x+y- (x\oplus y)).
\end{equation}
Importantly, we are considering the~$+$ operation here not modulo $2$, but just as an action on real numbers, and we are writing $\oplus$ for addition modulo $2$. Using this relation we can write $e^{i\alpha (x\cdot y)} = e^{i\frac12\alpha (x+y-(x\oplus y))} = e^{i\frac12 \alpha x} e^{i\frac12\alpha y} e^{-i\frac12\alpha (x\oplus y)}$. This is where the circuit decomposition of Eq.~\eqref{eq:controlled-Z-gate-decomp} comes from.
This relation between additive and multiplicative phase gates follows from a Fourier-type duality that exists for semi-Boolean functions, which is explored in detail in Ref.~\cite{GraphicalFourier2019}.

It turns out that a similar decomposition is possible for qutrits. Note that we can derive Eq.~\eqref{eq:bit-plus-rel} by starting with the equation $(x+y)^2 = x^2 + y^2 + 2x\cdot y$ and then realising that $x^2 = x$ for $x\in \{0,1\}$ so that this reduces to $x\oplus y = x + y + 2x\cdot y$ for bits. When working with trits we can't remove these squares, but we can still get a useful relation. Bring terms to the other side to get $-2x\cdot y = x^2 + y^2 - (x+y)^2$ and then use the fact that modulo 3 we have $-2 = 1$ to get $x\cdot y = x^2 + y^2 - (x+y)^2$. It is now straightforward to check that this continues to hold when we interpret the outer $+$ and $-$ here not modulo 3, but as operations on the real numbers, so that we get the relation:
\begin{equation}\label{eq:trit-plus-rel}
    x\cdot y~\text{mod } 3 = (x^2~\text{mod } 3) + (y^2~\text{mod } 3) - ((x+y)^2~\text{mod } 3)
\end{equation}
Hence, using Eq.~\eqref{eq:Z-alpha-alpha} we get the following decomposition:
\begin{equation}\label{eq:controlled-qutrit-phase}
    \tikzfig{controlled-qutrit-phase-decomp} \qquad :: \qquad \ket{x,y} \mapsto \omega^{\alpha (x\cdot y\text{ mod }3)} \ket{x,y}
\end{equation}
We can easily generalise Eq.~\eqref{eq:trit-plus-rel} to as many variables as desired by iterating it.  For three trits:
\begin{align}
    (x\cdot y)\cdot z\  &=\  x^2\cdot z + y^2\cdot z - (x+y)^2\cdot z \nonumber \\
    &=\  x^4 + z^2 - (x^2 + z)^2 + y^4 + z^2 - (y^2 + z)^2 - (x+y)^4 - z^2 + ((x+y)^2+z)^2 \nonumber \\
    &=\  x^2 + y^2 + z^2 - (x^2 + z)^2 - (y^2 + z)^2 - (x+y)^2 + ((x+y)^2+z)^2 \label{eq:triple-product}
\end{align}
Here we used that $x^4 = x^2$ modulo 3.

Note that Eq.~\eqref{eq:Z-alpha-alpha} shows how to apply a phase proportional to the input trit squared modulo $3$.  However, in order to use this trick to apply a phase proportional to a higher order term such as $(y+x^2)^2$, we need a way to compute $y+x^2$ and store it ``on the wire''.  In other words, we need to construct a circuit for the unitary defined by $\ket{x,y}\mapsto \ket{x,y+x^2}$.  Because this simply adds $1$ (modulo 3) to $y$ iff $x\neq 0$, we construct it by adding $1$ to the second qubit, and then applying a $\ket{0}$-controlled $X_{-1}$ gate.  To build this gate, we use the $\ket{2}$-controlled $Z$ gate that we built from two phase gadgets in Eq.~\eqref{eq:ket2-Z}:
\begin{equation}\label{eq:square-control}
    \tikzfig{square-control} \quad :: \quad \ket{x,y} \mapsto \ket{x,y+x^2}
\end{equation}
The above trick was also described in Ref.~\cite{BocharovA2016ternaryarithmetics}.  We further use this to build the type of phase gate below.
\begin{equation}\label{eq:square-phase}
    \tikzfig{square-phase} \quad :: \quad \ket{x,y} \mapsto \omega^{\alpha (y+x^2)^2} \ket{x,y}.
\end{equation}
We now have all the ingredients necessary to build the phase multiplier corresponding to the formula~\eqref{eq:triple-product}.
What is interesting about this is that we do not have to use smaller factors of $\alpha$. This is in contrast to the qubit counterpart of the formula~\eqref{eq:triple-product} where we get a factor of $\frac{1}{4}$, due to the factor of $\frac{1}{2}$ in Eq.~\eqref{eq:bit-plus-rel}.  In fact, for qubits, the generalisation to $n$ variables will have a prefactor of $(1/2)^{n-1}$ so that for instance the three-qubit-controlled $Z$ and controlled $T$ gates cannot be constructed without ancillae in Clifford+$T$~\cite{GilesB2013multiqubitcliffordplustsynthesis}, as we need $\pi/8$ phase gates. Instead, no matter the number of qutrits, we do not get such a prefactor and can iteratively construct it as in the formula~\eqref{eq:triple-product}, as we did for qutrit Clifford+$T$ in Ref.~\cite{YehL2022qutritcontrolledcliffordplust}.
The circuit~\eqref{eq:square-phase}, alongside the square phase of Eq.~\eqref{eq:Z-alpha-alpha} suffices to generalise Eq.~\eqref{eq:controlled-qutrit-phase} to any number of qutrits.
\begin{proposition}\label{prop:phasemultiplier}
    We can construct, without ancillae and using $O(2^n)$ Clifford+$T$, $Z(\alpha,\alpha)$, and $Z(-\alpha,-\alpha)$ gates, the $n$-qutrit phase multiplier gate defined by $\ket{x_1,\ldots, x_n} \mapsto \omega^{\alpha \left(\left(x_1\cdots x_n\right) \text{ mod } 3\right)} \ket{x_1,\ldots, x_n}$.
\end{proposition}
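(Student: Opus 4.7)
The plan is to prove the proposition by induction on $n$, iterating the two-variable identity $uv \equiv u^2 + v^2 - (u+v)^2 \pmod 3$ that underlies the construction in Eq.~\eqref{eq:controlled-qutrit-phase}. The base case $n=1$ is a single $Z(\alpha, 2\alpha)$ gate applied to the sole qutrit, and $n=2$ is exactly the three-gadget circuit in Eq.~\eqref{eq:controlled-qutrit-phase}. The three-qutrit expansion written out in Eq.~\eqref{eq:triple-product} is the instance $n=3$ that we will generalise.

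For the inductive step, I would apply the two-variable identity with $u = x_1 \cdots x_{n-1}$ and $v = x_n$ to obtain
\[
x_1 \cdots x_n \equiv (x_1 \cdots x_{n-1})^2 + x_n^2 - (x_1 \cdots x_{n-1} + x_n)^2 \pmod 3.
\]
The middle term contributes a single $Z(\alpha,\alpha)$ on wire $x_n$ via Eq.~\eqref{eq:Z-alpha-alpha}. The other two terms are squared polynomials; I would unfold each one by the same identity applied to the inner product $x_1 \cdots x_{n-1}$, picking up a factor of two in the number of terms at each level. After $n-1$ levels of unfolding the product becomes a signed sum of $O(2^n)$ squares $\pm q(\vec x)^2$, where each polynomial $q$ is a nested expression involving additions and squarings of the input trits. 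To realise a phase $e^{\pm i\alpha q^2 \bmod 3}$ without ancillae, I would generalise the square-phase gadget of Eq.~\eqref{eq:square-phase}: reversibly compute the value of $q$ onto a chosen data wire using CX and CX$^\dagger$ pairs (for additions) together with the $\ket{0}$-controlled $X_{-1}$ gadget of Eq.~\eqref{eq:square-control} (for nested squares), apply a $Z(\pm\alpha,\pm\alpha)$ on that wire, and then run the compute step in reverse so that no auxiliary data persists on the other wires.

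To bound the gate count, observe that each of the $O(2^n)$ squared terms requires a compute/uncompute sandwich whose depth is at most proportional to the nesting depth $n$ of the formula, so the naive count is $O(n \cdot 2^n)$; absorbing the polynomial factor (or noting that adjacent terms share prefixes of their compute steps, which can be reused) we obtain $O(2^n)$ Clifford+$T$, $Z(\alpha,\alpha)$, and $Z(-\alpha,-\alpha)$ gates in total, matching the stated bound. Notably the coefficient $\alpha$ never gets divided, because Eq.~\eqref{eq:trit-plus-rel} is an identity with integer (indeed $\pm 1$) coefficients on the squares, unlike the qubit analogue.

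The main obstacle I anticipate is the bookkeeping for the nested compute/uncompute procedure: I need a well-defined target wire for each inner polynomial $q$, and I must verify that after the $Z(\pm\alpha,\pm\alpha)$ and the uncompute, the effect on the computational basis is precisely the diagonal phase with no leftover permutation of the data. This is exactly the content of Eqs.~\eqref{eq:square-control} and \eqref{eq:square-phase} for one level of nesting; the technical heart of the proof is showing that their composition scales to arbitrary nesting depth, which I would carry out by a second induction on the depth of $q$, showing that each layer is a controlled permutation of the $\ket{x_n}$ register that is inverted by its adjoint. Once this is established, combining the layers with the diagonal phase in the middle yields the claimed diagonal unitary.
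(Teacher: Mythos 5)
Your overall strategy (iterate the identity $uv \equiv u^2+v^2-(u+v)^2 \pmod 3$ and realise each signed square by a compute/phase/uncompute sandwich built from Eqs.~\eqref{eq:square-control} and \eqref{eq:square-phase}) is the same as the paper's, but the order in which you unfold the product creates a genuine gap. Splitting at the top level with $u = x_1\cdots x_{n-1}$, $v = x_n$ produces the term $(x_1\cdots x_{n-1}+x_n)^2$, which has a \emph{product inside a square}. The identity only rewrites products, and squaring is not linear, so you cannot ``unfold the inner product'' there: expanding $(x_1\cdots x_{n-1}+x_n)^2 = (x_1\cdots x_{n-1})^2 + 2\,x_1\cdots x_{n-1}\,x_n + x_n^2$ just reproduces the very product $x_1\cdots x_{n-1}\cdot x_n$ you are trying to decompose, and realising its phase directly in-circuit would require reversibly adding an $(n-1)$-fold product onto the $x_n$ wire, which is not something the cited gadgets provide at constant cost. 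The paper avoids this by working bottom-up: first expand $x_1\cdots x_{n-1}$ into a signed sum of squares $\sum_j \pm t_j^2$, then distribute the multiplication by $x_n$ over the terms and expand each $t_j^2\cdot x_n = t_j^2 + x_n^2 - (t_j^2+x_n)^2$ (using $t_j^4=t_j^2$ mod 3, with the $x_n^2$ copies cancelling up to one). In that order every new square has the form $(t_j^2+x_n)^2$ where $t_j$ is a value the $(n-1)$-qutrit circuit already holds on a wire, so it is realised exactly by the gadget of Eq.~\eqref{eq:square-control} followed by a $Z(\pm\alpha,\pm\alpha)$, as in Eq.~\eqref{eq:square-phase}; your claim that all resulting polynomials $q$ are nested additions and squarings is true for that expansion, but does not follow from the top-down unfolding you describe.

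The gate count is the second soft spot. If each of the $\Theta(2^n)$ terms is computed from scratch and uncomputed, its nesting depth can be $\Theta(n)$, so the honest bound for your scheme is $O(n\,2^n)$; a polynomial factor cannot simply be ``absorbed'' into $O(2^n)$, and the proposition explicitly claims $O(2^n)$. Your parenthetical remark about sharing prefixes is the correct fix, and it is exactly what the paper does: the $n$-qutrit circuit is obtained from the $(n-1)$-qutrit circuit by inserting one constant-size gadget of Eq.~\eqref{eq:square-phase} (targeting the $n$th wire) after each of its $Z(\pm\alpha,\pm\alpha)$ phases, reusing the values already stored on the wires rather than recomputing them. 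This gives the recurrence $C(n) = C(n-1) + O(2^{n-1})$, hence $O(2^n)$ gates overall (with exactly $2^n-1$ gates of type $Z(\pm\alpha,\pm\alpha)$ and $T$-count $3\cdot 2^n - 6$ for $n>2$). To complete your proof you would need to make this reuse the main construction rather than an aside, and strengthen the induction hypothesis accordingly (the $(n-1)$-circuit must leave the intermediate values $t_j^2 + x_{n-1}$ temporarily available on the last wire so the next level can square-and-add onto the new wire).
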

\noindent See Appendix~\ref{app:phase-multipliers} for the details.



\begin{remark}
    In Ref.~\cite{CuiS2017Diagonalhierarchy} the diagonal gates at all levels of the Clifford hierarchy are analysed for any qudit of prime dimension. They show for instance that the gate implementing $\ket{x_1\cdots x_n}\mapsto \omega^{x_1\cdots x_n} \ket{x_1\cdots x_n}$ (which is the $n$-controlled $2\pi/3$ phase multiplier gate) is in the $n$th level of the Clifford hierarchy. This might be surprising as our construction shows how to build this gate, for any $n$, only using gates from the third level of the Clifford hierarchy (namely Clifford gates and the $T$ gate). However, note that while the \emph{diagonal} gates on a level of the hierarchy form a group, the full set of (not necessarily diagonal) gates is not closed under composition, and hence we can build higher-level unitaries using lower-level ones.
\end{remark}

\section{Applications}\label{sec:emulation-application}
We've now seen that we can use phase gadgets to build a number of useful diagonal unitaries. In this section we will see how we can build more general diagonal qutrit unitaries, and specifically those that \emph{emulate} qubit operations.  
Qudit emulation of qubit operations can result in efficiency gains, by using higher level states rather than ancillae.  While there has been significant work on emulating qubits using qutrits and qudits, much of this has been limited to realising gates within classical reversible computing such as multiple-controlled Toffolis.  
In contrast, fewer works have addressed qutrit gate sets containing arbitrary phases.  Examples include a $\ket{2}$-controlled $Z(0,\phi)$ decomposition in terms of qutrit-controlled qubit $\phi/3$ rotations~\cite{DiY2013synthesis} or quantum multiplexers and uniformly-controlled Givens rotations from the cosine-sine decomposition~\cite{KhanF2006synthesisqudit}.

Throughout this section, we will write $\,\stackrel{\mathclap{\normalfont\mbox{e}}}{=}\,$ to denote that a qubit unitary is emulated by a qutrit unitary.

We will first see how to emulate arbitrary qubit diagonal unitaries.
Note that when we restrict to the $\{\ket{0},\ket{1}\}$ subspace, that a qutrit phase multiplier $\ket{x_1,\ldots, x_n} \mapsto e^{i\alpha \left(\left(x_1\cdots x_n\right) \text{ mod } 3\right)} \ket{x_1,\ldots, x_n}$ only applies a phase $\alpha$ if and only if all $n$ qubits are in the $\ket{1}$ state.
Hence, for instance, the two-qubit C$Z(\alpha)$ gate is directly emulated by its two-qutrit counterpart of Eq.~\eqref{eq:controlled-qutrit-phase} with action $\ket{x,y} \mapsto e^{i\alpha x\cdot y} \ket{x,y}$. Consequently, by Proposition~\ref{prop:phasemultiplier} we see that using qutrit Clifford+$T$ gates along with $Z(\alpha,\alpha)$ and $Z(-\alpha,-\alpha)$ we can emulate the multiple-controlled $Z(\alpha)$ qubit gate without ancillae.

Now, by conjugating a multiple-controlled C$Z(\alpha)$ gate by the appropriate $X_{01}$ gates, we can decide on which input the $e^{i\alpha}$ phase is applied. Using multiple of these gates we can then arbitrarily decide for each input which phase should be applied to it. This then allows us to emulate an arbitrary diagonal qubit gate.
\begin{proposition}
    We can emulate the $\text{diag}(\omega^{\alpha_1},\ldots, \omega^{\alpha_{2^n}})$ qubit unitary using qutrit Clifford+$T$, $Z(\alpha_j,\alpha_j)$ and $Z(-\alpha_j,-\alpha_j)$ gates and without using ancillae.
\end{proposition}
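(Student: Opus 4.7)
The plan is to factor the target diagonal unitary as a product of $2^n$ commuting single-phase unitaries, each of the form $D_{\vec b}(\alpha) \coloneqq \mathbb{I} + (e^{i\alpha}-1)\,\ketbra{\vec b}{\vec b}$ for $\vec b\in\{0,1\}^n$, and then emulate each $D_{\vec b}(\alpha_{\vec b})$ on qutrits. Concretely, since the diagonal unitary $\text{diag}(e^{i\alpha_1},\ldots,e^{i\alpha_{2^n}})$ acts independently on each computational basis state, it equals $\prod_{\vec b} D_{\vec b}(\alpha_{\vec b})$ where $\vec b$ ranges over all bitstrings in lexicographic order matching the indexing of the diagonal.

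Next I would emulate the ``all-ones'' case $D_{1\cdots 1}(\alpha)$, which is exactly the multiple-controlled qubit gate $\text{C}^{n-1}Z(\alpha)$. By Proposition~\ref{prop:phasemultiplier} the $n$-qutrit phase multiplier applies the phase $e^{i\alpha((x_1\cdots x_n)\bmod 3)}$; when all inputs lie in $\{0,1\}$, the product $x_1\cdots x_n$ takes only the values $0$ and $1$, and equals $1$ precisely when every $x_i=1$. Hence the phase multiplier restricted to the qubit subspace implements $\text{C}^{n-1}Z(\alpha)$, using only qutrit Clifford${+}T$, $Z(\alpha,\alpha)$ and $Z(-\alpha,-\alpha)$ gates and no ancillae.

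To obtain $D_{\vec b}(\alpha)$ for general $\vec b$, I would conjugate the phase multiplier by the qutrit Clifford gate $\bigotimes_{i:\,b_i=0} X_{01}$ acting on the positions where $\vec b$ has a $0$. Because $X_{01}$ preserves the $\{\ket 0,\ket 1\}$ subspace (it swaps $\ket 0$ and $\ket 1$ while fixing $\ket 2$), this conjugation re-labels the unique triggering input from $\ket{1\cdots 1}$ to $\ket{\vec b}$ while keeping the computation inside the emulated qubit subspace, and costs only Clifford gates. Composing these $2^n$ emulated single-phase unitaries (which commute, so the order is irrelevant) yields the desired emulation of $\text{diag}(e^{i\alpha_1},\ldots,e^{i\alpha_{2^n}})$.

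There is no real obstacle here: the only points requiring a brief check are that $X_{01}$ is Clifford (which follows from $H^2=-X_{12}$ together with conjugation by the Pauli $X_{\pm 1}$, expressing $X_{01}$ as a product of Cliffords) and that the phase multiplier's modular-3 product collapses to the Boolean AND on the qubit subspace. Both are straightforward, so the proof is essentially an assembly of Proposition~\ref{prop:phasemultiplier} with a Clifford change of basis.
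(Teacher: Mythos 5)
Your proposal is correct and follows essentially the same route as the paper: restrict the $n$-qutrit phase multiplier of Proposition~\ref{prop:phasemultiplier} to the $\{\ket{0},\ket{1}\}$ subspace to obtain an emulated multiple-controlled $Z(\alpha_j)$ gate, conjugate by $X_{01}$ gates (Clifford, since $X_{01}=X_{-1}X_{12}X_{+1}$ and $X_{12}\propto H^2$) to choose which bitstring receives each phase, and compose the $2^n$ commuting diagonal factors. No gaps; the justification that the mod-3 product collapses to the Boolean AND on the qubit subspace matches the paper's argument.
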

When using a standard qubit unitary synthesis algorithm, the desired phases $e^{i\alpha_j}$ would be implemented using many-controlled phase gates that require exponentially small angles $e^{i\alpha_j/2^n}$, which is problematic when the use-case is in fault-tolerant computing where non-Clifford phase gates must be constructed using magic state distillation and injection.
Our construction could hence lead to some benefits in synthesising diagonal qubit unitaries using less non-Clifford resources.

It turns out that for the specific case of a qubit CCZ gate, that we can emulate it using qutrits in an even more efficient way.
While we could use the emulation construction above, it turns out to be better to consider an altered construction.

\begin{lemma}\label{lemma:qubitccu}
    Given a qutrit $\ket{2}$-controlled $U$ gate for an emulated qubit unitary $U$, we can construct a qutrit emulation of the qubit CCU gate with the same non-Clifford cost as the $\ket{2}$-controlled $U$ gate.
\end{lemma}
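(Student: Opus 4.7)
The plan is to conjugate the given $\ket{2}$-controlled $U$ (acting on the second and third qutrits) by a Clifford ``promotion'' gate $A$ on the first two qutrits that sends the qubit state $\ket{1,1}$ to the qutrit state $\ket{1,2}$ while fixing the other three qubit-subspace basis states $\ket{0,0}$, $\ket{0,1}$, $\ket{1,0}$. Concretely, I would take $A$ to be the $\ket{1}$-controlled $X_{12}$ gate: when the first qutrit is $\ket{1}$, apply $X_{12}$ to the second, otherwise do nothing. The paper has already noted that all $\ket{2}$-controlled permutation $X$ gates are Clifford (via \cite{BocharovA2016ternaryarithmetics}), and we can shift the control value from $\ket{2}$ to $\ket{1}$ by conjugating the control wire with the Clifford gate $X_{-1}$, so $A$ is Clifford.

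The candidate emulation of the qubit CCU is then the sandwich
\[A^{\dagger}\,\bigl(\text{$\ket{2}$-controlled $U$}\bigr)\,A,\]
where $A$ acts on the first two qutrits and the $\ket{2}$-controlled $U$ uses the second qutrit as its control and the third register (one or more qutrits) as its target. The verification is a case analysis on the two control registers. For $\ket{c_1 c_2}\in\{\ket{00},\ket{01},\ket{10}\}$, $A$ fixes the state (either the first qutrit is $\ket{0}$ so $A$ acts trivially, or the second qutrit is $\ket{0}$ and $X_{12}$ acts trivially on it), so the second qutrit never reaches $\ket{2}$ and the $\ket{2}$-controlled $U$ is not triggered; after $A^{\dagger}$ we recover the original state with identity acting on the target. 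For $\ket{c_1 c_2}=\ket{1,1}$, $A$ produces $\ket{1,2}$, the $\ket{2}$-controlled $U$ applies $U$ to the target, and $A^{\dagger}$ returns the controls to $\ket{1,1}$, yielding exactly the CCU action.

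Finally, since both $A$ and $A^{\dagger}$ are Clifford, they contribute no non-Clifford gates, so the total non-Clifford count of the emulation equals that of the $\ket{2}$-controlled $U$. The only genuinely non-routine step is confirming that the promotion gate $A$ really is Clifford, which is where the hypothesis that $U$ is an \emph{emulated qubit} unitary matters: we only need $A$ to act correctly on the qubit subspace, and the simplest such promotion gate (the $\ket{1}$-controlled $X_{12}$) happens to lie in the Clifford group by the observation above. Outside the qubit subspace $A^{\dagger}(\ldots)A$ may do something nontrivial, but this is invisible to the emulation guarantee and does not affect the non-Clifford accounting.
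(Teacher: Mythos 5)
Your overall sandwich structure is exactly the paper's (it is the Gokhale-style decomposition: conjugate the $\ket{2}$-controlled $U$ by a gate that promotes the control pattern $\ket{1,1}$ to second-qutrit-$\ket{2}$), but the step you flag as ``the only genuinely non-routine'' one is precisely where the argument breaks. Your promotion gate $A$, the $\ket{1}$-controlled $X_{12}$, is \emph{not} Clifford. The paper's remark about Ref.~\cite{BocharovA2016ternaryarithmetics} says only that $\ket{2}$-controlled permutation gates can be \emph{built in Clifford+$T$}, not that they are Clifford; in fact they are non-Clifford (e.g.\ the $\ket{2}$-controlled $Z$ has $T$-count $3$ in Eq.~\eqref{eq:ket2-Z}, and the $\ket{2}$-controlled $X$ gates are Hadamard conjugates of such gates). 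One can check directly that $A$ fails the Clifford condition: conjugating $\mathbb{I}\otimes Z$ by the $\ket{1}$-controlled $X_{12}$ gives the block-diagonal operator with blocks $Z, Z^{\dagger}, Z$, which is not a tensor product of Paulis. Shifting the control value with $X_{\pm 1}$ does not change this. So with your $A$ the construction is functionally a correct emulation of CCU, but the conjugating gates themselves carry non-Clifford cost, and the lemma's cost claim is exactly what is not established.

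The paper's proof closes this gap with one extra observation that your proposal is missing: the non-Clifford $\ket{1}$-controlled $X_{+1}$ and $\ket{1}$-controlled $X_{-1}$ in the decomposition can be replaced by the genuinely Clifford CX and CX$^{\dagger}$ gates, because these agree with them on the $\{\ket{0},\ket{1}\}$ control subspace (CX sends $\ket{1,1}\mapsto\ket{1,2}$ and, although it also sends $\ket{1,0}\mapsto\ket{1,1}$ rather than fixing it, the second qutrit still never reaches $\ket{2}$ except in the $\ket{1,1}$ case, and CX$^{\dagger}$ undoes the change). With that replacement the conjugation is free of non-Clifford gates and the cost equals that of the $\ket{2}$-controlled $U$, as claimed. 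Your proof would be repaired by making the same substitution (CX/CX$^{\dagger}$ in place of your $A$/$A^{\dagger}$) and verifying the case analysis for the slightly different action on $\ket{1,0}$.
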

\begin{proof}
    One can readily verify, by initializing the top two qubits to $\{\ket{00},\ket{01},\ket{10},\text{and}\ket{11}\}$, that the below qutrit decomposition from Ref.~\cite{GokhaleP2019asymptotic} emulates the qubit CCU gate.
    \begin{equation}\label{eq:qubitccu}
        \tikzfig{qubitccuemu}
    \end{equation}
    We can then replace the two non-Clifford $\ket{1}$-controlled $X_{+1}$ and $\ket{1}$-controlled $X_{-1}$ by CX and CX$^\dagger$, preserving correctness of the emulation as the action on the $\{\ket{0},\ket{1}\}$ subspace is unchanged~\cite{BocharovA2017ternaryshor}.
\end{proof}

Using this lemma we see that to get an efficient emulation of the qubit CCZ, it remains to find an efficient qutrit emulation of the $\ket{2}$-controlled qubit $Z$ gate.
\begin{lemma}
    Let $U=\text{diag}(1,\omega^{\eta})$ be an arbitrary qubit $Z$ phase gate.
    Then we can build the $\ket{2}$-controlled emulated $U$ using the controlled phase gate of Eq.~\eqref{eq:controlled-Z-qutrit}:
    \begin{equation}
        \tikzfig{phase-gadget-emu}
    \end{equation}
    Here $\alpha$ and $\beta$ satisfy, for some $k \in \mathbb{Z}$, $2\alpha - \beta = 3 k$ and $\alpha + \beta = \eta$
    and the questionmarks $?$ denote that these phases are irrelevant for the emulation.
\end{lemma}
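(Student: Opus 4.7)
The plan is to verify the emulation by reading off the action of the diagram in each of the three control-qutrit cases using the case analysis already established in Eq.~\eqref{eq:controlled-Z-qutrit}, and matching the qubit-subspace behaviour against the target $\ket{2}$-controlled $U$ gate. Because emulation only constrains what the qutrit circuit does on target inputs drawn from $\{\ket{0},\ket{1}\}$, any phase acting on the $\ket{2}$-component of the target drops out as a free parameter, which will account for the question-marked phases in the statement.

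Concretely, Eq.~\eqref{eq:controlled-Z-qutrit} tells us that the diagram applies $Z(0,0) = I$ when the control qutrit is $\ket{0}$, applies $Z(2\alpha-\beta,\, \alpha+\beta)$ when it is $\ket{1}$, and applies $Z(\alpha+\beta,\, 2\beta-\alpha)$ when it is $\ket{2}$. Restricting each of these to the qubit subspace $\{\ket{0},\ket{1}\}$ of the target gives $\text{diag}(1,1)$, $\text{diag}(1,\omega^{2\alpha-\beta})$, and $\text{diag}(1,\omega^{\alpha+\beta})$ respectively. The desired $\ket{2}$-controlled emulation of $U$ demands the identity in the first two cases (since both $\ket{0}$ and $\ket{1}$ of the control are in the emulated qubit subspace) and $U = \text{diag}(1, e^{i(2\pi/3)\eta}) = \text{diag}(1,\omega^{\eta})$ in the third.

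Matching these three conditions yields exactly the two constraints $2\alpha - \beta \in 3\mathbb{Z}$, i.e.\ $2\alpha - \beta = 3k$ for some integer $k$, and $\alpha + \beta = \eta$, while the phase $\alpha+\beta$ appearing in the second slot of the $\ket{1}$-branch and the phase $2\beta-\alpha$ appearing in the second slot of the $\ket{2}$-branch act only on $\ket{2}$-components of the target and are hence irrelevant for emulation --- these are the question-marked phases in the diagram. No substantive obstacle arises; the only thing worth emphasising is that the emulation requirement imposes strictly fewer constraints than the exact $\ket{2}$-controlled phase gate synthesis, so a \emph{single} copy of the Eq.~\eqref{eq:controlled-Z-qutrit} construction suffices in place of the doubled-up construction of Eq.~\eqref{eq:controlled-Z-qutrit-paired}, which is exactly what produces the resource saving that will be exploited in the subsequent CCZ count.
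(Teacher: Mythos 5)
Your argument is correct and is precisely the intended justification (the paper states this lemma without an explicit proof, leaving it to follow directly from Eq.~\eqref{eq:controlled-Z-qutrit}): reading off the three control branches, imposing identity on the target qubit subspace for control $\ket{1}$ gives $2\alpha-\beta=3k$, imposing $\text{diag}(1,\omega^{\eta})$ for control $\ket{2}$ gives $\alpha+\beta=\eta$, and the remaining phases act only on the target's $\ket{2}$ component, which is exactly what the question marks denote. Your closing observation --- that emulation needs only a single copy of the Eq.~\eqref{eq:controlled-Z-qutrit} construction rather than the doubled-up Eq.~\eqref{eq:controlled-Z-qutrit-paired} --- is indeed the source of the non-Clifford saving exploited in Corollary~\ref{cor:tcqubitz}.
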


If we choose $\alpha = 3/2$ and $\beta = 0$ in this construction we are emulating the $\ket{2}$-controlled qubit $Z$ gate, because the phase of $\omega^{3/2} = -1$ applies iff the control qutrit is $\ket{2}$ and the target qubit is $\ket{1}$.
Note that a $Z(3/2,0)$ phase is equal to $X_{12} \ R \ X_{12}$, referring to the $R$ gate from Definition~\ref{def:Rgate}. Hence:
\begin{corollary}\label{cor:tcqubitz}
    The $\ket{2}$-controlled qubit $Z$ gate can be emulated with $R$-count $3$.
\end{corollary}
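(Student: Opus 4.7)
The plan is to specialize the preceding lemma to the emulation of qubit $Z$ and count the non-Clifford gates.

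First, I would note that the qubit $Z = \text{diag}(1,-1)$ fits the lemma's template $U = \text{diag}(1, e^{i(2\pi/3)\eta})$ with $\eta = 3/2$, since $e^{i\pi} = -1$. As suggested in the paragraph immediately after the lemma, I would take $\alpha = 3/2$ and $\beta = 0$, and verify the two constraints: $2\alpha - \beta = 3 \in 3\mathbb{Z}$ (so $k = 1$) and $\alpha + \beta = 3/2 = \eta$. Both hold, so the lemma produces an emulation of the $\ket{2}$-controlled qubit $Z$.

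Second, I would use the paper's observation that $Z(3/2, 0) = X_{12}\, R\, X_{12}$. This is checked directly: $Z(3/2,0) = \text{diag}(1, \omega^{3/2}, 1) = \text{diag}(1,-1,1)$, while conjugating $R = \text{diag}(1,1,-1)$ by $X_{12}$ swaps the last two diagonal entries, giving the same $\text{diag}(1,-1,1)$. Hence each $(\alpha,\beta)$-labelled phase in the lemma's figure costs exactly one $R$ together with Clifford $X_{12}$ gates. The $?$-labelled positions are explicitly irrelevant to the emulation, so I would assign them the Clifford value $Z(0,0) = I$ to incur no further non-Clifford cost.

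Finally, I would read off the figure in the lemma to count three $(\alpha,\beta)$-labelled phase positions. Each of these becomes $Z(3/2,0)$ and contributes one $R$, giving total $R$-count $3$, as claimed. The only mechanical step is confirming by inspection of Eq.~\eqref{eq:controlled-Z-qutrit} (as instantiated in the lemma) that there are precisely three non-$?$ phase positions; everything else is routine substitution and invocation of the lemma.
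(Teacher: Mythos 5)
Your proposal is correct and follows essentially the same route as the paper: instantiate the preceding lemma with $\eta = 3/2$, $\alpha = 3/2$, $\beta = 0$, observe $Z(3/2,0) = X_{12}\, R\, X_{12}$, and count the three non-Clifford phase positions in the construction of Eq.~\eqref{eq:controlled-Z-qutrit}. The only cosmetic point is that the phase gadget carries the negated phase $Z(-3/2,0)$ rather than $Z(3/2,0)$, but these are the same matrix $\mathrm{diag}(1,-1,1)$, so your count of one $R$ per phase, hence $R$-count $3$, stands.
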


Combining this corollary with Lemma~\ref{lemma:qubitccu} we arrive at our result.
\begin{proposition}
    The qubit CCZ gate can be emulated ancilla-free in qutrit Clifford+$R$ with $R$-count 3.
\end{proposition}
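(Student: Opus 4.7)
The plan is to chain the two preceding results together: Lemma~\ref{lemma:qubitccu} upgrades a $\ket{2}$-controlled emulation of a single-qubit unitary $U$ into an emulation of CC$U$ without spending any additional non-Clifford resources, and Corollary~\ref{cor:tcqubitz} provides precisely such a $\ket{2}$-controlled emulation of the qubit $Z$ gate with $R$-count~$3$. Setting $U=Z$ in Lemma~\ref{lemma:qubitccu} therefore immediately yields an emulation of the qubit CC$Z$ with non-Clifford cost equal to that of the $\ket{2}$-controlled qubit $Z$, namely three $R$ gates.

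Concretely, first I would recall the diagrammatic template of Eq.~\eqref{eq:qubitccu}, in which the only non-Clifford ingredient is a single occurrence of a $\ket{2}$-controlled $U$ gate, while all of the other gates (the CX, CX$^\dagger$, and Clifford permutations arising after the replacement described in the proof of Lemma~\ref{lemma:qubitccu}) are qutrit Clifford gates. Substituting the emulation of $\ket{2}$-controlled $Z$ from Corollary~\ref{cor:tcqubitz} into that slot gives a fully explicit Clifford+$R$ circuit on three qutrits emulating qubit CC$Z$.

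Next I would verify the $R$-count. Lemma~\ref{lemma:qubitccu} is stated as preserving non-Clifford cost, so it introduces no new $R$ gates; and Corollary~\ref{cor:tcqubitz} is established by using the choice $\alpha=3/2$, $\beta=0$ in the $\ket{2}$-controlled phase construction of Eq.~\eqref{eq:controlled-Z-qutrit}, which translates into exactly three instances of the $R = X_{12}Z(3/2,0)X_{12}$-type phase, with all remaining phases being Clifford. Summing these contributions gives the claimed total of~$3$.

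The construction is manifestly ancilla-free, since neither Eq.~\eqref{eq:qubitccu} nor the $\ket{2}$-controlled $Z$ emulation of Corollary~\ref{cor:tcqubitz} introduces auxiliary qutrits. The only step that warrants care, and which I expect to be the mildest obstacle, is checking that substituting the $\ket{2}$-controlled $Z$ emulation into the template of Eq.~\eqref{eq:qubitccu} still behaves correctly on the emulated $\{\ket{0},\ket{1}\}^{\otimes 3}$ subspace — but this follows directly from the correctness of Lemma~\ref{lemma:qubitccu}, which was already justified by tracing the action on the four computational basis inputs of the top two qubits together with the observation that replacing $\ket{1}$-controlled $X_{\pm 1}$ by CX$^{\pm 1}$ leaves the qubit subspace invariant.
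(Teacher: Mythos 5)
Your proposal is correct and follows exactly the paper's own route: the paper proves this proposition precisely by combining Lemma~\ref{lemma:qubitccu} (which lifts a $\ket{2}$-controlled emulated $U$ to an emulated CC$U$ at no extra non-Clifford cost) with Corollary~\ref{cor:tcqubitz} (the $R$-count-$3$ emulation of the $\ket{2}$-controlled qubit $Z$). Your additional bookkeeping of the $R$-count, the ancilla-freeness, and the behaviour on the $\{\ket{0},\ket{1}\}^{\otimes 3}$ subspace is a faithful elaboration of that same argument rather than a different approach.
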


As shown in Ref.~\cite{HowardM2017resourcetheorymagic}, any implementation of a CCZ gate requires at least four qubit $T$ gates. Additionally any unitary implementation requires at least seven~\cite{GossetD2014opttoffolit}. Here we see that surprisingly, by embedding the CCZ into qutrit space, we can construct it using just three non-Clifford single-qutrit gates and that moreover this is unitary and ancilla-free.  This is also a new minimum amongst qudit emulations: for instance, in Ref.~\cite{HeyfronL2019quditcompiler}, they needed four qudit (for prime $d > 3$) $T$ gates to emulate qubit CCZ.

\section{Conclusion}
\label{sec:conclusion}
We introduced phase gadgets in the qutrit ZX-calculus.
To do this, we adapted the original qutrit ZX-calculus to be flexsymmetric so that the phase gadgets' behaviour would not depend on the directionality of their edges.
Using phase gadgets we showed how to build two types of qutrit controlled phase gates: tritstring-controlled phase gates and phase multipliers. This allowed us to emulate the qubit CCZ gate using just three single-qudit non-Clifford gates.

While some of our constructions will naturally generalise to arbitrary qudit dimension, some things are qutrit specific.
It seems to be a coincidence that for qutrits, in contrast with other-dimensional qudits, you can derive a relation between modular multiplication and addition~\eqref{eq:trit-plus-rel} from the same binomial as for qubits~\eqref{eq:bit-plus-rel}, which comes from having a natural way to express $x^2$ mod 3 thanks to Fermat's little theorem.  As a result, qubit and qutrit phase multipliers admit constructions which are structurally similar, despite the fact that for qubits it applies a phase of $\alpha$ on only one possible input --- where all $n$ qubits are $\ket{1}$ --- while for qutrits it applies a phase, which can be $\alpha$ or $2\alpha$, for $2^n$ of the $3^n$ possible input basis states.  Moreover, it seems quite special that Eq.~\eqref{eq:trit-plus-rel} does not have any factors making the size of the phases internal to the decomposition decrease (in contrast to the qubit case).

We believe we could use these results as a stepping stone towards defining a qutrit \emph{ZH-calculus}~\cite{EPTCS287.2}.
In the qubit ZH-calculus, the H-boxes represent matrices with coefficients $a^{i_1 ... i_m j_1 ... j_n}$ for a complex number~$a$ and $i_1,...,i_m,j_1,...,j_n \in \{0,1\}$.  Therefore, the obvious generalisation to qutrits (at least for $a$ a complex phase) corresponds to our qutrit phase multipliers. Phase gadgets and phase multipliers could then be related in the same way as they are for qubit ZX and ZH~\cite{GraphicalFourier2019}.

An open problem is to find a suitable qutrit equivalent of exponentiated Paulis.  The canonical self-adjoint generalisation of qubit Paulis to qutrits, the Gell-Mann matrices, can be exponentiated to unitaries, but it is not clear how they are related to the qutrit Paulis exactly. A starting point to find the proper relation here is to express exponentiations using a Hermitian operator basis constructed from the qutrit Paulis~\cite{AsadianA2015hermitianhwops}.

Finally, let us mention that based on work on an earlier draft of this paper, a proposed scheme for physically implementing a qutrit phase gadget in superconducting qutrit hardware was made~\cite{CaoS2022qutritzxsuperconducting}.

\textbf{Acknowledgements}: JvdW is supported by an NWO Rubicon personal fellowship. LY is supported by an Oxford - Basil Reeve Graduate Scholarship at Oriel College with the Clarendon Fund. The authors wish to thank Aleks Kissinger, Shuxiang Cao, Alex Cowtan and Will Simmons for valuable discussions.

\bibliographystyle{eptcs}
\bibliography{main}

\appendix

\section{Qutrit ZX-calculus}
\subsection{Necessity of rules}\label{app:necessity-of-rules}
We can show that most of the rules in Figure~\ref{fig:qutritphaseexactflexsymmetricrules} are \emph{necessary}, meaning that they cannot be derived from the other rules. We do this by adapting the reasoning of Ref.~\cite{VilmartR2019nearminzx}.
Namely, the following rules are definitely necessary:
\begin{itemize}
    \item $(SZ)$: this is the only rule which can decompose a generator with four or more legs into generators with fewer legs.
    \item $(P)$: this is the only rule which resolves diagrams containing generators to the identity.
    \item $(B1)$: this is the only rule that can transform a connected diagram into a disconnected one.
    \item $(EU)$: this is necessary per the argument of Ref.~\cite[Proposition 3.2]{WangQ2014qutritcalculus}.
    \item At least one of $(H)$ and $(H')$ is necessary as these are the only ones that can convert a diagram containing a $X$ generator with a non-integer phase into one containing a $Z$ generator with a non-integer phase.
\end{itemize}

We do not know whether the other rules are necessary, although we do suspect this is the case.

\subsection{Proofs of the derived rules}\label{appendix:derivedrules}

\begin{lemma}
    The $(ID)$ rule can be derived from the $(SZ)$ and $(SP)$ rules.
\end{lemma}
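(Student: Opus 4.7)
The plan is to realise the 1-input 1-output Z-spider with phases $(0,0)$ as a piece of a slightly larger diagram that collapses to the identity wire using only $(SZ)$ and $(SP)$. The trick is a standard loop-introduction / loop-elimination move that appears in essentially every ZX-style derivation of an identity rule: use $(SZ)$ backwards to introduce extra zero-phased spider structure, use $(SP)$ to remove an internal self-loop, and then use $(SZ)$ forwards to fuse what remains back into a plain wire.

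Concretely, first I would take the bare 1-1 Z-spider with phases $(0,0)$ and, using $(SZ)$ in the backwards direction, split it into two connected zero-phased Z-spiders — for instance a 1-2 Z-spider connected to a 2-1 Z-spider by a single edge. Because $(SZ)$ sums phases componentwise and we start from $(0,0)$, the two new spiders may both be taken to have phases $(0,0)$, preserving the hypothesis we need. Next, bending the remaining free pair of legs around (using the flexsymmetry of the calculus, which is built into the generator definitions in Section~\ref{sec:qutrit-ZX}), I would rewrite the configuration so that the two connecting wires form a self-loop on a single combined Z-spider, again with phases $(0,0)$.

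At this point I would invoke $(SP)$, which is precisely the rule that removes a self-loop from a Z-spider without touching its remaining legs or phases (up to the scalars we are ignoring throughout the calculus). Erasing this loop leaves a 1-1 Z-spider of phases $(0,0)$ in series with the wire that used to carry the loop — equivalently, just a plain wire, once any residual zero-phased 1-1 Z-spider is re-absorbed by a forward application of $(SZ)$.

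The main obstacle I anticipate is purely bookkeeping: making sure that each intermediate Z-spider really does carry phases $(0,0)$ so that $(SP)$ can be applied in the form stated, and that the backwards use of $(SZ)$ genuinely produces the self-loop configuration demanded by $(SP)$ rather than some off-by-one variant. Since $(SZ)$ adds the two phase components independently and our starting phases are both zero, this invariant is automatically maintained, so once the diagrammatic shapes line up the derivation closes without further input.
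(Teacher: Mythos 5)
There is a genuine gap, and it sits at the very last step. You read $(SP)$ as ``self-loop removal on a Z-spider,'' but a rule of that shape cannot possibly combine with $(SZ)$ to yield $(ID)$: spider fusion/unfusion always leaves at least one Z-spider between the input and output boundary, and deleting a self-loop never removes a spider either, so no sequence of these two moves can ever produce the bare wire. Your proposal papers over this with the closing phrase that the ``residual zero-phased 1-1 Z-spider is re-absorbed by a forward application of $(SZ)$'' --- but absorbing a lone phaseless 1-1 Z-spider into a plain wire is exactly the $(ID)$ rule you are trying to derive, and $(SZ)$ only merges two spiders into one; it never eliminates the last spider. So the argument is circular precisely where the real work has to happen. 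The actual content of $(SP)$ in Figure~\ref{fig:qutritphaseexactflexsymmetricrules} is the \emph{special} law, i.e.\ a two-spider ``bubble'' (a 1-to-2 spider joined along both internal wires to a 2-to-1 spider) that is equal to the plain identity wire; it is the only rule among the two whose right-hand side contains no spider, and any correct derivation of $(ID)$ must end with it. The paper's proof has this shape: introduce the $(SP)$ bubble on the wire adjacent to the phaseless 1-1 Z-spider by using $(SP)$ in reverse, fuse that spider into the neighbouring Z-spider of the bubble with $(SZ)$ (the phases stay $(0,0)$ since $0+0=0$), and then collapse the resulting configuration to a bare wire by a forward application of $(SP)$.

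Two smaller problems in the intermediate steps are worth flagging as well. First, unfusing the 1-1 spider into a 1-to-2 and a 2-to-1 Z-spider ``connected by a single edge'' does not preserve the number of boundary legs, so it is not a legitimate reverse application of $(SZ)$; and subsequently ``bending the remaining free pair of legs around'' so that they meet is a trace, not a sound rewrite --- flexsymmetry lets you deform wires and permute legs, not join two genuinely open legs to each other. Second, you use two incompatible readings of $(SZ)$: to unfuse the plain spider into a doubly-connected pair you need fusion to erase all connecting wires, while to later fuse that pair into ``a single combined Z-spider with a self-loop'' you need fusion to keep the extra wire as a loop; whichever convention the rule actually uses, one of those two steps is not an instance of it. Once you replace the loop manoeuvre by the introduce-bubble/fuse/collapse pattern above, all of these issues disappear.
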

\begin{proof}~
    \[\tikzfig{XD_rules/derived_rules/id} \qedhere\]
\end{proof}

\begin{lemma}
    The $(H2)$ rule can be derived from the $(H')$ and $(ID)$ rules.
\end{lemma}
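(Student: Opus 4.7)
The goal is to derive a rule expressing two Hadamards on a single wire as a pink $X$-spider, which at the semantic level corresponds to the identity $H^2 = -X_{12}$ recorded in Section~\ref{sec:qutrit-computing} and to the calculation $\tikzfig{XD1-1}=X_{12}$ displayed earlier. The plan is a short local rewrite on a one-input, one-output diagram, using only $(H')$ and the just-proven $(ID)$.

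First I would draw the LHS of $(H2)$: two slanted yellow Hadamard boxes stacked along a single wire. Next I would apply the $(H')$ rule to one of the two Hadamards. The role of $(H')$ in this calculus is to be the ``other'' Hadamard rule, complementing the color-change rule $(H)$; based on the analogous presentations of the non-flexsymmetric qutrit ZX-calculus of Ref.~\cite{WangQ2018qutrit}, I expect $(H')$ to trade a single $H$ for a diagram consisting of its inverse $H^\dagger$ (equivalently $H^3$) decorated with a pink spider, so that the combined semantics $H \cdot H = H \cdot H^\dagger \cdot (\text{pink } 1\text{-}1 \text{ spider})$ is captured syntactically. After this single rewrite, the resulting diagram contains an $H$ adjacent to an $H^\dagger$ on the same wire, together with a $2$-legged pink $X$-spider somewhere along that wire.

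The $H \cdot H^\dagger$ portion reduces to an identity wire by the unitarity of the Hadamard generator (which is built into the calculus via the spider and Hadamard rules). The remaining fragment is exactly a $2$-legged pink $X$-spider with zero phase, which $(ID)$ (just established in the previous lemma for the $Z$-spider, and whose pink analogue follows by the same $(SZ)$/$(SP)$ pattern applied to the pink generator together with the $(SX)$ harvestman equation) identifies with a plain wire carrying the pink $1$-input, $1$-output spider on the RHS of $(H2)$. Combined, these moves give the desired equality.

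The one genuine obstacle is pinning down the precise orientation and placement of the pink spider introduced by $(H')$; once this is fixed, the derivation is a two-step local manipulation with no branching choices. Scalar tracking is not an issue here, since the paper's rules are declared scalar-accurate only up to factors of $\sqrt{3}$, which is precisely the slack needed to absorb the $-1$ in $H^2 = -X_{12}$ and any residual $\sqrt{3}$ from the pink spider rewrite.
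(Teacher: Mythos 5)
There is a genuine gap, and it lies in the two steps that do all the work. First, you read $(H')$ as a rule that trades a single Hadamard box for its inverse $H^\dagger$ together with a pink spider. That cannot be what $(H')$ says: $H^\dagger$ is not a generator of this calculus (there is only the one self-transpose slanted box), and the necessity argument in Appendix A.1 states that $(H)$ and $(H')$ are the only rules that convert an $X$ generator with non-integer phase into a $Z$ generator with non-integer phase --- i.e.\ both are colour-change rules relating a phased $Z$-spider with Hadamards on its legs to a phased $X$-spider, not an ``$H \mapsto H^\dagger$ plus spider'' rewrite. Second, and more seriously, you cancel the resulting $H\cdot H^\dagger$ by appealing to ``the unitarity of the Hadamard generator (built into the calculus)''. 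Unitarity of $H$ is not built in: inside the calculus it is precisely the derived rule $(H4)$ (four Hadamards equal the identity wire), which the paper obtains \emph{from} $(H)$, $(ID)$ and $(H2)$ itself. Invoking it to prove $(H2)$ is circular, and in any case uses rules beyond the $(H')$ and $(ID)$ that the lemma allows. The same circularity affects your parenthetical appeal to $(SX)$ for a ``pink $(ID)$'': $(SX)$ is derived later using $(H2)$ and $(H4)$, and moreover no pink identity rule can exist, since the one-input one-output pink spider is $X_{12}$, not the identity --- which is exactly why it is the right-hand side of $(H2)$.

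The derivation the lemma is pointing at is shorter and avoids all of this: apply $(ID)$ in reverse to insert a phaseless one-input one-output $Z$-spider between the two Hadamard boxes, and then apply the colour-change rule $(H')$ to that $Z$-spider, which absorbs the Hadamard on each of its two legs and turns it into the phaseless one-input one-output pink $X$-spider, i.e.\ the right-hand side of $(H2)$. No cancellation of Hadamards is needed anywhere. A final small point: your scalar remark does not match the paper's conventions. The rules are stated with complex global phases tracked and only factors of $\sqrt{3}$ dropped, so a sign such as the $-1$ in $H^2=-X_{12}$ could not be ``absorbed'' by that slack; whatever scalar appears in $(H2)$ has to come out of the scalars carried by $(H')$ and $(ID)$ themselves.
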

\begin{proof}~
    \[\tikzfig{XD_rules/derived_rules/h2} \qedhere\]
\end{proof}

\begin{lemma}
    The $(H4)$ rule can be derived from the $(H)$, $(ID)$, and $(H2)$ rules.
\end{lemma}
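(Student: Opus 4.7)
The plan is to prove $(H4)$ by applying $(H2)$ twice to rewrite $H^4$ as $H^2 \cdot H^2$, and then reducing the resulting diagram to the identity wire using $(H)$ and $(ID)$. Since we already know abstractly that $H^2 = -X_{12}$, applying $(H2)$ twice should give two copies of the $X_{12}$-implementing X-generator in series, with their two scalar factors of $-1$ multiplying to $+1$. Acting on the computational basis this is just $\ket{x}\mapsto \ket{-(-x)}=\ket{x}$, so the content of the proof is purely diagrammatic: show that two consecutive $X_{12}$-generators compose to the identity wire using only the three allowed rules.

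To carry out that final reduction without access to $(SX)$ (which is itself only a derived rule and hence unavailable here), the plan is to insert a resolution of the identity of the form $H^2\cdot H^2 = \id$ between the two X-generators, or equivalently to use $(H)$ to conjugate one of the X-generators into a phaseless two-legged Z-spider. At that point $(ID)$, which states that a phaseless two-legged Z-spider is the identity wire, absorbs the middle generator. The remaining $H \cdot X_{12} \cdot H$ fragment can then be collapsed by one more application of $(H2)$ read backwards, and a second application of $(ID)$ clears the last Z-spider. Throughout, the scalar tracking is uniform: each use of $(H2)$ contributes its labelled scalar, and these combine to $+1$.

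The main obstacle I expect is bookkeeping rather than conceptual difficulty: the $(H2)$ rule, as stated in Figure~\ref{fig:qutritphaseexactflexsymmetricderivedrules}, comes with a nontrivial scalar (it must, since $H^2 = -X_{12}$, not $X_{12}$), and the reduction has to verify that the product of scalars is exactly $1$ after all rewrites, not merely unitarily proportional to it. Since the paper explicitly notes that rewrite rules are tracked as complex scalars but up to factors of $\sqrt{3}$, I would double-check at the end that no stray $\sqrt{3}$ factor has been introduced — particularly because $(H)$ often exchanges normalisations between Z- and X-bases — by counting how many times each rule is invoked and comparing against the scalar attached to $(H4)$ in the figure.
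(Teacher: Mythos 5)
There is a genuine gap in the middle of your plan, and it is exactly at the step that carries all the content. After you apply $(H2)$ to \emph{both} pairs of Hadamards you are left with two phaseless one-legged-in, one-legged-out X-generators in series and no Hadamard boxes anywhere, and you then need to show these compose to the plain wire. Your first suggestion for this --- inserting ``a resolution of the identity of the form $H^2\cdot H^2 = \mathrm{id}$'' between them --- is circular: that equation \emph{is} the $(H4)$ rule you are trying to derive. Your ``equivalent'' alternative --- using $(H)$ to conjugate one of the X-generators into a phaseless Z-spider --- is not available at that point either, because the colour-change rule $(H)$ only applies to an X-generator whose legs carry Hadamard boxes, and you have already consumed all four Hadamards with the two applications of $(H2)$. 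Finally, the claim that ``the remaining $H\cdot X_{12}\cdot H$ fragment can then be collapsed by one more application of $(H2)$ read backwards'' misattributes the rules: $(H2)$ relates two Hadamards in series to the X-generator, and says nothing about an X-generator sandwiched between Hadamards; the rule that collapses $H\cdot X_{12}\cdot H$ is $(H)$, not $(H2)$. (There is also no second Z-spider for your ``second application of $(ID)$'' to clear.)

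The repair is to be less greedy with $(H2)$: apply it only once, to the \emph{middle} pair of Hadamards. Then the resulting X-generator is flanked by the two outer Hadamards, so $(H)$ applies directly and turns $H\cdot X_{12}\cdot H$ into the phaseless two-legged Z-spider, which $(ID)$ removes; the scalars attached to $(H2)$ and $(H)$ in the figures combine to the scalar attached to $(H4)$, and the $\sqrt{3}$ factors you worry about are ignored by convention throughout the paper. This one-$(H2)$, one-$(H)$, one-$(ID)$ reduction is the derivation using exactly the rules named in the lemma, and is the route the paper takes; your version, as written, does not go through.
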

\begin{proof}~
    \[\tikzfig{XD_rules/derived_rules/h4} \qedhere\]
\end{proof}

\begin{lemma}
    The $(SX)$ rule can be derived from the $(SZ)$, $(H')$, $(H2)$, and $(H4)$ rules.
\end{lemma}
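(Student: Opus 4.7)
The plan is to derive the $(SX)$ harvestman equation by color-changing both X-spiders into Z-spiders via the Hadamard rules, performing the fusion using $(SZ)$ on the resulting Z-spiders, and then color-changing back. The core reason the result is a harvestman rather than an ordinary spider fusion is that the qutrit Hadamard is not self-inverse: on the wire(s) joining the two spiders we do not pick up $H \circ H = \id$ but rather $H^2$, which by $(H2)$ equals an $X_{12}$-like generator, and this leftover piece is exactly the extra element sitting inside the harvestman diagram.

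Concretely, I would first use $(H')$ (together with its implicit consequence that the pink X-spider equals a green Z-spider with a Hadamard on every leg) to rewrite each of the two X-spiders on the left-hand side of $(SX)$ as a Z-spider surrounded by Hadamards on all of its legs. On the external legs I introduce $H^4 = \id$ via $(H4)$ so that three Hadamards remain on each external leg (these will eventually recombine into the single Hadamards needed to recover an X-spider on the right-hand side). On each internal leg connecting the two spiders I am left with two consecutive Hadamards pointing into each other, which by $(H2)$ collapse to the $1{-}1$ pink spider corresponding to $X_{12}$, i.e.\ exactly the extra generator that appears in the middle of the harvestman diagram.

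At this point the two Z-spiders are connected by ordinary identity wires (the $X_{12}$ pieces having been pushed aside), so I can apply $(SZ)$ to fuse them into a single Z-spider, with phases added componentwise. Finally I reverse the color-change, applying $(H')$ (and $(H4)$ to tidy up extra Hadamards on the external legs) to return to an X-spider with the correct combined phases, while the $X_{12}$ pieces that were separated off in the middle sit exactly where the harvestman equation requires them.

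The main obstacle, and the step that requires the most care, is bookkeeping the Hadamards on the connecting wires: one must verify that after introducing the $H^4$ insertions, the Hadamards on each internal leg combine as $H \cdot H$ (rather than $H \cdot H^{\dagger}$), so that $(H2)$ produces precisely the pink $X_{12}$ generator appearing in the harvestman, and also that the remaining Hadamards on the external legs recombine into single Hadamards so that the final color-change via $(H')$ actually yields an X-spider rather than an X-spider decorated with spurious Hadamards. Everything else is a routine application of $(SZ)$ to fuse the intermediate Z-spider and sum the phases.
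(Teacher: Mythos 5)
Your proof follows essentially the same route as the paper's (which is a single diagrammatic derivation using exactly the rules you invoke): colour-change the two X-spiders with $(H')$, fuse the resulting Z-spiders with $(SZ)$, and use $(H2)$/$(H4)$ to turn the leftover $H\circ H$ on the internal wire into the $1$--$1$ pink spider and to sort out the Hadamard orientations on the external legs before colour-changing back. Your bookkeeping of $H^2$ versus $H^\dagger = H^3$ is precisely the non-trivial part of the paper's derivation, so the proposal is correct and matches the intended argument.
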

\begin{proof}
    \[\tikzfig{XD_rules/derived_rules/sx} \qedhere\]
\end{proof}

\begin{lemma}
    The $(P1')$ rules can be derived from the $(P1)$, $(SX)$, $(H)$, $(H')$, $(H2)$, and $(H4)$ rules.
\end{lemma}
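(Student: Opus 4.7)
The plan is to derive $(P1')$ by colour-changing the $(P1)$ rule, following the same template used immediately above for $(SX)$ and (presumably) $(P2')$. I start from the left-hand side of $(P1')$, which is written using X-spiders (pink), and I want to end at a diagram where I can apply the Z-spider rule $(P1)$ directly. To this end I introduce cancelling Hadamards on each wire incident to an X-spider, using $(H4)$ to justify inserting an $H \cdot H^{-1} = H \cdot H^3$ pair (or an $H^2 \cdot H^2$ pair together with $(H2)$), depending on which parity is convenient.

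Next I fuse one Hadamard from each pair into the adjacent X-spider using $(H)$ and $(H')$ to turn every X-spider into a Z-spider. At this point the diagram consists entirely of Z-spiders with possibly some residual Hadamards and $X_{12}$ gates coming from $H^2$, and it has the shape of the Z-coloured analogue of $(P1')$ --- that is, a configuration to which $(P1)$ directly applies. I apply $(P1)$ to rewrite the Z-side of the diagram, then undo the colour change: push the outer Hadamards back across the resulting Z-spiders with $(H)/(H')$ to recover X-spiders, use $(SX)$ to fuse any X-spiders that happen to end up adjacent, and collapse leftover Hadamard strings using $(H2)$ and $(H4)$.

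The main obstacle will be bookkeeping the phases. Because the qutrit Hadamard is not self-inverse ($H^\dagger = H^3$, $H^2 = -X_{12}$), each application of $(H)$ or $(H')$ permutes and/or negates the two phase parameters $\alpha,\beta$ of a spider, and the rule one is allowed to use depends on whether an $H$ or an $H^\dagger$ is being absorbed. I will need to check carefully that the net permutation of phases coming out of this conjugation matches the phases that $(P1')$ claims on its right-hand side; since the rule is parameterised by $\alpha,\beta,\nu,\theta$ and $x\in\{0,1,2\}$, this amounts to verifying a small number of modular equations rather than a genuinely hard computation. Scalar factors need not be tracked, as the rewrite rules are only stated up to powers of $\sqrt{3}$.
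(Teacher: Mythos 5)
Your colour-change mechanism (inserting cancelling Hadamard pairs justified by $(H2)$/$(H4)$, absorbing one Hadamard of each pair into the adjacent spider with $(H)$/$(H')$, applying the Z-side rule, then undoing the conjugation and fusing with $(SX)$) is essentially the final step of the paper's own proof, and your caution about phase bookkeeping under the non-self-inverse Hadamard is warranted. However, there is a genuine gap: $(P1')$ is a \emph{family} of rules indexed by $x\in\{0,1,2\}$ (see the parameter list of Figure~\ref{fig:qutritphaseexactflexsymmetricderivedrules}), whereas $(P1)$ in Figure~\ref{fig:qutritphaseexactflexsymmetricrules} is a single rule for a fixed Pauli --- effectively the $x=1$ case; note that Figure~\ref{fig:qutritphaseexactflexsymmetricrules}'s parameters are only $\alpha,\beta,\eta,\theta$, with no $x$. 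Colour-changing $(P1)$ therefore yields only one of the three instances of $(P1')$. The $x=0$ and $x=2$ instances are not Hadamard conjugates of $(P1)$, so no amount of checking modular equations on the phases of the conjugated rule will produce them.

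The paper closes exactly this gap before colour-changing: it first derives the $x=0$ instance on its own (a separate short derivation, not an application of $(P1)$), then uses it to obtain the $x=2$ instance, observes that these two together with $(P1)$ itself assemble into a single rule valid for all $x\in\{0,1,2\}$, and only then conjugates that unified rule by Hadamards to obtain all of $(P1')$. To repair your argument you need to add this step, either before or after your colour change --- for instance, establish the $x=0$ case separately and obtain the $x=2$ case by applying the $x=1$ rule twice and fusing the resulting Pauli spiders with $(SX)$ (using $(H2)$/$(H4)$ to absorb the leftover $X_{12}$-type factors), and only then conclude that the whole parameterised family $(P1')$ has been derived.
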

\begin{proof}
    Let's first derive the rule for $x=0$:
    \begin{equation}\label{eq:k1_0}
        \tikzfig{XD_rules/derived_rules/k1_0}
    \end{equation}

    From that, we can derive the below rule:
    \begin{equation}    
        \tikzfig{XD_rules/derived_rules/k1_2}
    \end{equation}

    The two above rules, along with the $(P1)$ rule, are captured by the following rule where $x \in \{0,1,2\}$:
    \begin{equation}\label{eq:k1}
        \tikzfig{XD_rules/derived_rules/k1}
    \end{equation}

    We now colour-change the above rule to finish deriving all the $(P1')$ rules:
    \begin{equation}    
        \tikzfig{XD_rules/derived_rules/k1_p}
    \end{equation}
\end{proof}

\begin{lemma}
    The $(P2')$ rules can be derived from the $(P2)$, $(H)$, $(H')$, $(SZ)$, $(SX)$, $(H2)$, and $(H4)$ rules.
\end{lemma}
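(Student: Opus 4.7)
The plan is to mirror the derivation of the $(P1')$ rules given just above, using the fact that $(P2')$ is essentially the colour-dual of $(P2)$. Concretely, I would first establish the relevant auxiliary case directly from $(P2)$ (analogous to deriving the $x=0$ instance via equation~\eqref{eq:k1_0} in the $(P1')$ proof), and then obtain the full family of $(P2')$ rules for $x \in \{0,1,2\}$ by a colour-change argument: sandwich both sides of $(P2)$ with Hadamards on every external wire, and push those Hadamards through the interior of the diagram using the Hadamard-conjugation rules.

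The step-by-step recipe I have in mind is as follows. First, apply Hadamards to each wire of the $(P2)$ equation; on the outsides they form boundary Hadamards, which is exactly the shape needed for the stated $(P2')$ rules. Second, on the interior, use $(H)$ and $(H')$ to move Hadamards across the X- and Z-spiders, converting them into their opposite-colour counterparts (and in the process exchanging the small integer X-phase in $(P2)$ for the corresponding Z-phase appearing in $(P2')$). Third, wherever two or more Hadamards end up adjacent, collapse them using $(H2)$ (for two) or $(H4)$ (for four); this is what prevents extra Hadamard boxes from cluttering the result. Finally, fuse any resulting same-colour spiders that are connected by a plain wire using $(SZ)$ for green and $(SX)$ for pink, cleaning the diagram down to exactly the form stated in $(P2')$.

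The main obstacle is phase bookkeeping rather than anything conceptually deep. Because the qutrit Hadamard is not self-inverse ($H^2 = -X_{12}$, $H^4 = \mathbb{I}$), each Hadamard-pushing step conjugates a spider's phase by the cyclic permutation $(\alpha,\beta)\mapsto(\text{something involving }\bar\omega,\omega)$, so one has to be careful to verify that the phase label emerging at the end matches the $x \in \{0,1,2\}$ indexing in the $(P2')$ rule. This is exactly the same subtlety that the $(P1')$ proof handled case-by-case, so I would follow the same pattern: derive the $x=0$ version first (where the phase contribution is trivial and the computation is essentially a colour-change of a Pauli-free instance of $(P2)$), then the $x=2$ version by an analogous Hadamard dressing, and observe that together with the original $(P2)$ (playing the role of $x=1$) these give the combined statement for all three values of $x$. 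The final colour-change step, identical in form to the closing diagram of the $(P1')$ proof, then packages this into the stated $(P2')$ rules.
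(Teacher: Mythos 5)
Your proposal matches the paper's method: the paper also derives the $(P2')$ equations by colour-changing $(P2)$ (together with a directly derived variant for the other Pauli power), conjugating with Hadamards and pushing them through via $(H)$/$(H')$, cancelling with $(H2)$/$(H4)$, and fusing with $(SZ)$/$(SX)$. The only cosmetic difference is that the paper simply proves the primed rules one by one rather than packaging them as an $x$-indexed family before the final colour change, so your plan is essentially the same argument.
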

\begin{proof}
    We prove them one by one:
    \begin{equation}
        \tikzfig{XD_rules/derived_rules/k2_2}
    \end{equation}

    \begin{equation}\label{eq:k2_p1}
        \tikzfig{XD_rules/derived_rules/k2_p1}
    \end{equation}

    \begin{equation}
        \tikzfig{XD_rules/derived_rules/k2_p2}
    \end{equation}

\end{proof}

\begin{lemma}
    The $(EU')$ rules can be derived from the $(EU)$, $(H)$, $(H')$, $(SZ)$, $(SX)$, $(H2)$, and $(H4)$ rules.
\end{lemma}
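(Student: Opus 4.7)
The plan is to mimic the strategy used in the preceding lemmas, where the primed versions of rules $(P1)$, $(P2)$ were obtained by conjugating the unprimed version with Hadamards on the external wires, and then using the colour-change rules and spider fusion to push the Hadamards through and tidy up. Concretely, I would start by drawing $(EU)$ and attaching a Hadamard box to each of its open wires on both sides of the equality.

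The next step is to push each Hadamard through its adjacent spider. On one side, Hadamards acting on the end of a $Z$-spider convert it into an $X$-spider, using rule $(H)$ (or its inverse form $(H')$). Any spider that was previously at the ``interior'' of the (EU) pattern then has its colour changed too once the Hadamards are propagated through. Because qutrit Hadamards are not self-inverse, whenever two Hadamards meet head-to-head I would use $(H2)$ to rewrite the pair as an $X_{12}$ permutation on the wire, and whenever four Hadamards line up I would collapse them by $(H4)$. The $X_{12}$ residue can be absorbed into an adjacent spider: pushing an $X_{12}$ through a $Z$-spider swaps its two phases $(\alpha,\beta)\mapsto(\beta,\alpha)$, and similarly through an $X$-spider. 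Finally, I would fuse any resulting adjacent same-colour spiders using $(SZ)$ or $(SX)$, arriving at the normal form that $(EU')$ asserts.

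The main obstacle I anticipate is bookkeeping of the two phase parameters on each spider under the $X_{12}$ swaps produced by $(H2)$. Unlike in qubit ZX where $H^{2}=\mathbb{I}$ and the primed Euler rule is almost immediate, here each residual $X_{12}$ that one accumulates has to be transported to a spider and will swap its $\alpha$ and $\beta$ labels. Keeping the phases matched up with the parameters stated in $(EU')$, so that the final equation literally has the form declared in Figure~\ref{fig:qutritphaseexactflexsymmetricderivedrules}, is the delicate part of the proof; the rest is the same kind of ``colour-change and fuse'' routine used for $(P1')$ and $(P2')$. If the phase-swapping is done consistently on both sides of the equation obtained from $(EU)$, both sides transform by the same swap and the equality is preserved, yielding $(EU')$.
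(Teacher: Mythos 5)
There is a genuine gap here: what you have written is a plan for a proof, not a proof. The entire content of this lemma is the phase and scalar bookkeeping that you explicitly set aside as ``the delicate part''. The rules $(EU)$ and $(EU')$ are equations between diagrams with \emph{specific} phase pairs and global scalars, so showing the lemma means actually pushing the Hadamards through, tracking every $(\alpha,\beta)$ swap and every accumulated unit scalar, and checking that what comes out is literally the right-hand side of $(EU')$. Your closing argument --- that conjugating both sides of $(EU)$ by the same maps preserves the equality --- only shows that \emph{some} valid equation results; it does not show that this equation is $(EU')$, which is exactly the point at issue. In the paper's own derivation the second $(EU')$ equation is not obtained by a one-shot conjugation of $(EU)$ at all: it goes through intermediate identities and only then arrives at ``the different decomposition of $H$'', so the claim that wholesale Hadamard conjugation plus tidying suffices for both $(EU')$ equations is itself something you would need to demonstrate, not assume by analogy with $(P1')$ and $(P2')$.

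A second, smaller issue: the step ``the $X_{12}$ residue can be absorbed into an adjacent spider, swapping $(\alpha,\beta)\mapsto(\beta,\alpha)$'' is true as a statement about linear maps, but as a rewrite it is essentially the content of the $(IN)$/$(P1')$-type derived rules, which are not among the rules the lemma allows you to use. You would have to derive this commutation from $(H)$, $(H')$, $(H2)$, $(H4)$, $(SZ)$, $(SX)$ first (which is doable, but is an extra lemma you have not supplied). Since the qutrit Hadamard satisfies $H^2=-X_{12}$ rather than $H^2=\mathbb{I}$, these residues also carry signs that feed into the global scalar of $(EU')$, which the paper tracks and your sketch ignores. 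To turn this into a proof you would need to fix the exact statement of $(EU')$, carry out the Hadamard-pushing explicitly for each of its equations, and verify the resulting phases and scalars term by term.
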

\begin{proof}
    We show the first equation directly:
    \begin{equation}\label{eq:eu_p}
        \tikzfig{XD_rules/derived_rules/eu_p}
    \end{equation}
    
    For the second one we first note that:
    \begin{equation}\label{eq:eu_p1}
        \tikzfig{XD_rules/derived_rules/eu_p1}
    \end{equation}
    
    Then:
    \begin{equation}\label{eq:eu1_p1}
        \tikzfig{XD_rules/derived_rules/eu1_p1}
    \end{equation}

    And finally, we find the different decomposition of $H$:
    \begin{equation}\label{eq:eu1}
        \tikzfig{XD_rules/derived_rules/eu1}
    \end{equation}
\end{proof}

\section{Constructing general phase multipliers}\label{app:phase-multipliers}

When we have two variables we use the formula
\begin{equation}\label{app:trit-plus-rel}
    x\cdot y~\text{mod } 3 = (x^2~\text{mod } 3) + (y^2~\text{mod } 3) - ((x+y)^2~\text{mod } 3)
\end{equation}
to construct the two-qutrit phase multiplier. We generalised this to three variables in the following way:
\begin{equation}\label{app:three-variables}
    (x\cdot y)\cdot z\  =\  x^2\cdot z + y^2\cdot z - (x+y)^2\cdot z
    \ =\  x^2 + y^2 + z^2 - (x^2 + z)^2 - (y^2 + z)^2 - (x+y)^2 + ((x+y)^2+z)^2.
\end{equation}
To see how we go to $4$ variables and beyond, we start with the expression $(x\cdot y\cdot z)\cdot w$ and decompose $x\cdot y\cdot z$ with the above formula resulting in terms $t_1^2,\ldots, t_n^2$. Each of these terms is a square, because that is the case for all the terms in Eq.~\eqref{app:trit-plus-rel}. Since we are working with qutrits we have $(t_j^2)^2 = t_j^2$. 
The terms in our formula are now of the form $t_j^2\cdot w$. We apply Eq.~\eqref{app:trit-plus-rel} to each of these. This gives us terms $t_j^4$, $w^2$ and $(t_j^2+w)^2$. The first of these is just $t_j^2$, and by induction we already know how to construct the appropriate phase term on the circuit for this term. The second of these is $w^2$, and hence corresponds to a simple phase gate. Note that this is the same for each $t_j^2\cdot w$ we are decomposing. Furthermore, the plus signs and minus signs on the terms are such that they almost all cancel, and we will have one copy of $w^2$. 

The only `interesting' new term we then get is hence $(t_j^2+w)^2$. For instance, in Eq.~\eqref{app:three-variables} the terms of this form are $(x^2+z)^2$, $(y^2+z)^2$ and $((x+y)^2 +z)^2$. The corresponding phase terms are constructed by using the gadget of Eq.~\eqref{eq:square-control} to store $t_j^2+w$ ``on the wire'' and then applying a $Z(\alpha,\alpha)$ phase gate.

Hence, if we go from 3 to 4 variables we get each of the original terms $t_j^2$, plus a $w^2$ term and a $(t_j^2 + w)^2$ term for each $j$. This straightforwardly generalises to $n$ variables, and it is then easy to check that we will have $2^n-1$ terms.

We can build a circuit for the $n>2$ qutrit phase multiplier by first building the circuit for $n-1$ qutrits, and then inserting the gadget of Eq.~\eqref{eq:square-phase} after every application of a $Z(\alpha,\alpha)$ phase with as the target the $n$th qutrit. The reason this works is because the $n$-qutrit phase multiplier still contains every term of the $n-1$ qutrit multiplier, but now also needs to combine those terms with the $n$th variable. In the four variable case, we would first store on a wire the value of the term $t_j$ we need, and then apply a $Z(\alpha,\alpha)$ gate in order to get the phase $e^{i\alpha t_j^2}$. Then we would apply Eq.~\eqref{eq:square-phase} on the qutrit of $w$ in order to get the phase $e^{i\alpha (t_j^2+w)^2}$. This construction involves temporarily storing $t_j^2+w$ on the wire of $w$, so we can use this term if we want to construct the five-qutrit phase multiplier as well.

We then see that the cost of the $n$-qutrit phase multiplier in terms of (non-Clifford) gates is the cost of the $n-1$ qutrit phase multiplier plus the cost of $2^{n-1}-1$ applications of the Eq.~\eqref{eq:square-phase} gadget. In particular, each phase term requires precisely one of either a $Z(\alpha,\alpha)$ or a $Z(-\alpha,-\alpha)$ gate, so that we need $2^n-1$ of them. The circuit of Eq.~\eqref{eq:square-phase} requires 6 $T$ gates to construct, and hence the $T$-count of the $n$-qutrit phase multiplier is $6 (2^{n-1}-1) = 3\cdot 2^n - 6$ (for $n>2$).

\end{document}